\theoremstyle{definition}
\newtheorem{theorem}{Theorem}[section]
\newtheorem{corollary}[theorem]{Corollary}
\newtheorem{lemma}[theorem]{Lemma}
\newtheorem{remark}{Remark}
\newtheorem{proposition}[theorem]{Proposition}
\newtheorem{example}[theorem]{Example}
\title{A Construction of Optimal Quasi-cyclic Locally Recoverable Codes using Constituent Codes}
\author{Gustavo Terra Bastos}
\address{Gustavo Terra Bastos. Department of Mathematics and Statistics, Federal University of S\~ao Jo\~ao del-Rei, 170 Frei Orlando Sq, Sao Joao del Rei, State of Minas Gerais, 36307-352, Brazil } \email{{gtbastos@ufsj.edu.br}}
\author{Angelynn \'Alvarez}
\address{Angelynn \'Alvarez. Department of Mathematics, Embry-Riddle Aeronautical University, 3700 Willow Creek Rd., Prescott, AZ 86301, United States} \email{{alvara44@erau.edu}}
\author{Zachary Flores}
\address{Zachary Flores. Two Six Technologies, 901 N Stuart St, Arlington, VA 22203, United States}
\email{{floresza03@gmail.com}}
\author{Adriana Salerno}
\address{Adriana Salerno. Department of Mathematics, Bates College, 3 Andrews Rd, Lewiston, ME, 04240, United States} \email{{asalerno@bates.edu}}
\date{}
\thanks{}
\keywords{Locally Recoverable Codes, Constituent Codes, Quasi-Cyclic Codes, Finite Fields}
\begin{document}

\begin{abstract} A locally recoverable code of locality $r$ over $\mathbb{F}_{q}$ is a code where every coordinate of a codeword can be recovered using the values of at most $r$ other coordinates of that codeword. Locally recoverable codes are efficient at restoring corrupted messages and data which make them highly applicable to distributed storage systems. Quasi-cyclic codes of length $n=m\ell$ and index $\ell$ are linear codes that are invariant under cyclic shifts by $\ell$ places. %Quasi-cyclic codes are generalizations of cyclic codes and are isomorphic to $\mathbb{F}_{q} [x]/ \langle x^m-1 \rangle$-submodules of $\mathbb{F}_{q^\ell} [x] / \langle x^m-1 \rangle$.
In this paper, we decompose quasi-cyclic locally recoverable codes into a sum of constituent codes where each constituent code is a linear code over a field extension of $\mathbb{F}_q$. Using these constituent codes with set parameters, we propose conditions which ensure the existence of almost optimal and optimal quasi-cyclic locally recoverable codes with increased dimension and code length.
\end{abstract}

\maketitle

\section{Introduction}\label{introduction}
Cloud and distributed storage systems have reached a massive scale, making data reliability against storage failures crucial. One way to repair a storage failure is to use error-correcting codes that efficiently recover lost data with minimum overhead. One class of error-correcting codes, known as \textit{locally recoverable codes} (LRCs), can repair any erasure of a node that occurred during transmission by accessing other nodes in the code. 
% If an LRC can recover original information using up to $r$ other nodes, we call $r$ the \textit{locality} of the code. 
LRCs have caught the attention of the information theory community due to their applications to storage systems and high performance for restoring corrupted messages and data.
%The number of nodes which must be accessed in order to recovery an erased information is named the locality of that node. If $i$ denotes a node, then $t=locality(i)$. 

 More precisely, a linear error correcting code $C$ is an $\mathbb{F}_q$-subspace of $\mathbb{F}_q^n$.  We say that $n$ is the \emph{code length} of $C$ and we let $k$ denote the \emph{dimension} of $C
$ as an $\mathbb{F}_q$-vector space. The minimum number of nonzero coordinates among the nonzero elements of $C$ is called the \emph{minimum distance} $d=d(C)$ of the code. A linear error correcting code with these parameters is called an $[n,k,d]$ code. The \textit{information rate} of $C$ is the ratio $\frac{k}{n}$ and a  rate close to $1$ implies a reduction of the amount of redundancy and overhead.

%Every $(n,k,d,r)-$linear code $C$ over a finite field $\mathbb{F}_q$ is a LRC, where $n,k,$ and $d$ are the classical parameters length, dimension, minimum distance, respectively, and the new parameter $r$ corresponding to maximum locality of each node, namely, $r=\max\{locality(i)\}_{i=1}^{n}$. Thus, the $i-th$ coordinate/node may be rewritten from $\mathbb{F}_q -$linear combination of (up to) $r$ other coordinates. The set of these coordinates is refereed as recovering set of $i$, and it is important to stress that recovering sets of different coordinates are disjoint.

A code $C$ over $\mathbb{F}_q$ is a \textit{locally recoverable code} (LRC) if for every \textit{codeword} $\textbf{c}\in  C$, any coordinate of $\textbf{c}$ can be expressed as a function of at most $r$ other coordinates of $\textbf{c}$, excluding the original coordinate. Namely, if the $i$-th coordinate of $\textbf{c}$ is lost, it can be computed by accessing at most $r$ other coordinates in $\textbf{c}$. The \textit{locality of the $i$-th coordinate} is denoted by $r_i$ and the set of coordinates used to recover the lost coordinate is called its \textit{recovery set}. The \textit{locality of $C$} is defined as $r=\max_{1\leq i\leq n} r_i$.  It is worth noting that the recovery sets of different coordinates are disjoint.

 The minimum distance of a code determines how many errors a code can detect and correct, and codes with small locality and large dimension can reconstruct any data erasures in distributed storage systems with minimal overhead. Thus, a large minimum distance, small locality, and large dimension are desired. The relationship between the parameters $d$, $n$, $k$, and $r$ is described by the Singleton-type bound
 \begin{equation}\label{singletontypebound}
     d \leq n - k - \bigg\lceil \frac{k}{r} \bigg\rceil + 2 =: d_S (C).
 \end{equation}
 A code that achieves equality in (\ref{singletontypebound}) is called an \textit{optimal} LRC\footnote{
In the particular case where the parameters of a code yield $d_S (C)\leq 0$, we assume that the respective code does not exist.
}. Optimal codes  allow for the detection
and correction of the maximum number of errors and thus are the types of code we aim to construct.

The groundbreaking paper~\cite{TamoBarg} by Tamo and Barg started a new age in the study of LRCs by using Reed-Solomon code constructions to construct LRCs over finite fields which are optimal, meaning that they allowed for the detection and correction of the maximum number of errors. The authors addressed the construction of LRCs from cyclic codes in~\cite{tamo-barg-goparaju-calderbank} and also provided constructions of cyclic and geometric algebraic codes from the perspective of LRCs in~\cite{TamoBargcycliccodes, TamoBargalgcurves}.

The locality of cyclic codes have been discussed frequently in literature. Goparaju and Calderbank~\cite{Goparaju} published one of the first papers dealing with the locality in cyclic codes. In~\cite{Huang}, the localities of some classical cyclic code families like Hamming, Golay, and BCH have been studied in this paper. In~\cite{luo}, the authors present families of optimal LR cyclic codes with minimum distances $3$ and $4$ whose length is not restricted to other parameters. In~\cite{Pan}, the locality of cyclic codes obtained by cyclotomic polynomials is also studied, where optimal codes are introduced by a different metric than that one given by Singleton bound. 

This paper is concerned with the quasi-cyclic (QC) codes, which are a natural generalization of cyclic codes. By ~\cite{longcodes}, it is known that QC codes are asymptotically good codes, have connections with the family of convolutional codes as described in \cite{SolomonTilborg}, and have an interesting algebraic structure that has been studied in \cite{seguin} and \cite{lally}. By the works of Ling and Solé~\cite{lingsole} and G\"uneri and \"Ozbudak in~\cite{cem}, we know that QC codes in $\mathbb{F}_q ^{m\ell}$ can be represented using  \textit{constituent codes}, which are linear codes obtained via the Chinese Remainder Theorem, by projecting QC codes over finite extension fields described by irreducible factors of $x^m -1$, where $\gcd(m,q)=1$. From the minimum distances of these constituent codes, the authors in \cite{cem} presented a lower bound of the minimum distance of the respective QC code which is known to be sharper than some previous lower bounds of minimum distances available in the literature.

\par
In a recent paper~\cite{RB}, Rajput and Bhaintwal found an upper bound for the locality of QC LRCs using generator matrices of these codes in systematic form, namely where the first block is the identity matrix and the other blocks consist of smaller circulant matrices. Some particular results for 1-generator case are also provided. 

\par
In this paper, we propose conditions on the code parameters to ensure the existence of (almost) optimal QC LRCs  using constituent codes as described in \cite{cem}. Moreover, we show that our code construction can yield QC LRCs with large dimension and code length. The paper is organized as follows. In Section 2, we give background results related to QC LRCs and their constituent codes from \cite{cem}. % Moreover, computing the dimension of a QC LRC using constituent codes is also described in Lemma~\ref{dimmension_lemma} in Section 3. 
In Section 3, we state previously known results on the locality of a QC LRC and describe how we can use constituent codes to determine the locality of a QC LRC. %, which depends on a special cyclic code associated to the respective QC LRC that arises  from the trace formulas defined by the irreducible factors of $x^m-1$ in $\mathbb{F}_{q}[x]$. We note that this approach of finding the locality is different from the approach in \cite{RB}.
%In Section 5, we describe our construction of (almost) optimal QC LRCs which are created by adding additional hypotheses on the LRC's constituent codes---namely by increasing the constituent codes' code length and dimension while preserving their minimum distances. 
%\textcolor{magenta}{A\textit{: I added a lot to this paraph to futher describe our work in Section 5, I thought the original paragraph was too vague and we want to hook the reader's attention and give them a better idea of what we plan to do.}
 In Section 4, we describe our construction of (almost) optimal QC LRCs.
% which are created by increasing the code lengths and dimensions of the constituent codes of the QC LRC, all while preserving the constituent codes' minimum distances. This increase in parameters of the constituent codes consequently increases the dimension and code length of the respective QC LRC so that the Singleton-type upper bound of the minimum distance in \eqref{singletontypebound} is reduced to meet the lower bound of the minimum distance found in ~\cite{cem}, resulting in an (almost) optimal code. 
We also present our main theorem that describes a descending chain of Singleton-type bounds for the minimum distance of the QC LRCs in our code construction, and then discuss some examples. %of codes that exemplify the importance of Lemma~\ref{dimmension_lemma} and the lower bound in~\cite{cem} to construct optimal LRCs. 

%Finally, Section 6 we describes conclusions and possible future directions of our work, and Section 7 includes acknowledgements.

\subsection*{Acknowledgements}
This project was initiated at the Latinx Mathematicians Research Community (LMRC) kick-off workshop organized by the American Institute of Mathematics (AIM) in Summer 2021. We give  thanks to the LMRC organizers, Jes\'us A. De Loera, Pamela E. Harris, and Anthony V\'{a}rilly-Alvarado and to AIM and NSF
for funding the LMRC. We also thank the Simons Laufer Mathematical Institute (formerly Mathematical Sciences Research Institute) for their financial support that allowed us to continue our collaboration during the 2023 Summer Research in Mathematics program. This material is based upon work supported by and while the fourth author was serving at the National Science Foundation. Any opinion, findings, and conclusions or recommendations
expressed in this material are those of the authors and do not necessarily reflect the views of the
National Science Foundation.

\section{QC LRCs and Constituent Codes}\label{preliminaries}

Let $\mathbb{F}_q$ be a finite field with $q$ elements, and $m$ and $q$ positive integers such that $\gcd(m, q)=1$.  Let $\beta$ be a primitive $m$-th root of unity over $\mathbb{F}_q$. Fix an element  $\alpha \in \mathbb{F}_{q^{\ell}}$ such that $\mathbb{F}_{q^\ell} = \mathbb{F}_{q}(\alpha)$ so that $\{1, \alpha, ...,\alpha^{\ell-1}\}$ forms an $\mathbb{F}_{q}$-basis for $\mathbb{F}_{q^\ell}$.

A \textit{quasi-cyclic} (QC) code of index $\ell$ is a code $C \subset \mathbb{F}_{q}^{m\ell}$ where for all $\textbf{c} \in C$, $T^\ell(\textbf{c}) \in C$, where $T$ is the standard shift operator on $\mathbb{F}_{q}^{m\ell}$ and $\ell$ is the smallest positive integer with this property. %\textcolor{red}{Throughout this work, $\ell$ is set to denote the index of QC codes} \textcolor{magenta}{AA: Did not we already establish in the previous sentence that the QC code is of index $\ell$?}.  
Note that a QC code becomes a cyclic code if $\ell=1$. If $C$ is a QC code in $\mathbb{F}_{q}^{m\ell}$, then any codeword $\textbf{c}$
% \begin{equation}
% c = (c_{00}, c_{10}, ... ,c_{m-1,0},\ldots , c_{0, l-1}, c_{1,l-1}, ... ,c_{m-1,l-1})\in \mathbb{F}_q ^{ml}.
% \end{equation}
%\textcolor{magenta}{AA: is it a good idea to introduce both the ``vector form" and the ``matrix form" of a codeword in $C$ in this early  in the paper? We use the ``vector form" later in Section 2.}
%
can be represented as  an $m \times \ell$ array
\begin{equation}\label{codeword}
\textbf{c}=\begin{pmatrix}
c_{00}& c_{01}&...&c_{0,\ell-1} \\ 
c_{10}& c_{11}&...&c_{1,\ell-1}  \\ 
\vdots&\vdots&\ddots&\vdots\\ 
c_{m-1,0}& c_{m-1,1}&...&c_{m-1,\ell-1} 
\end{pmatrix}
 \end{equation}
where the action of the operator $T^{\ell}$ is equivalent to a row shift in~\eqref{codeword}. The commonly-used polynomial representation of codewords of a QC code $C$ is discussed in~\cite{lally, lingsole}. Additionally, one can also represent every codeword $\textbf{c} \in C$ as a vector by flattening the array in (\ref{codeword}) by the columns, namely:
\begin{equation}\label{phi}
\begin{pmatrix}
c_{00}&...&c_{0,\ell-1} \\ 
c_{10}& ...&c_{1,\ell-1}  \\ 
\vdots&\ddots&\vdots\\ 
c_{m-1,0}&...&c_{m-1,\ell-1} 
\end{pmatrix} \longmapsto (c_{00}, \ldots, c_{m-1,0},\ldots ,c_{0,\ell-1},\ldots, c_{m-1,\ell-1}).
\end{equation}

 From this point on, we will move freely between the two representations pointed out in ~\eqref{codeword} and~\eqref{phi} for codewords of a QC code $C$ of index $\ell$. The choice of representation of a codeword in $C$ is helpful in determining the locality of $C$ based on the locality of a particular cyclic code associated to $C$. This relationship will be described in detail in Section~\ref{locality}.

Consider the ring $R:=\mathbb{F}_{q}[x] / \langle x^m-1 \rangle$ where $\mathrm{gcd}(m,q)=1$. To any codeword $\textbf{c}$ as in (\ref{codeword}), we associate an element of $R^{\ell}$:
 \begin{equation}
     \vec{\textbf{c}}(x) := \left( c_0(x), c_1(x),...,c_{\ell-1}(x)\right) \in R^{\ell}
 \end{equation}
 where for each $0 \leq j \leq \ell-1$,
 \begin{equation}
     c_{j}(x):=c_{0j}+c_{1j}x+c_{2j}x^2+\cdots+c_{m-1,j}x^{m-1} \in R.
 \end{equation}
 We note that the map $\phi: \mathbb{F}_{q}^{m\ell} \to R^\ell$ defined by $\phi(\textbf{c})=\vec{\textbf{c}}(x)$ is an $R$-module isomorphism. Hence, any QC LRC over $\mathbb{F}_{q}$ of length $m\ell$ and index $\ell$ can be viewed as an $R$-submodule of $R^\ell$. 
 
 %This interpretation will be useful when representing a QC LRC using constituent codes in Section \ref{Sec4}.

%\section{Representing a QC Code using Constituent Codes} 

\begin{comment}Besides the usual polynomial representation of the codewords of a QC code $C$ utilized in~\cite{lingsole, lally}, in this paper we assume the array and vector representations of a codeword $c\in C$, which obviously are 1-1, flattening the array by the columns, namely,
%
\begin{equation}\label{phi}
c=\begin{pmatrix}
c_{00}&...&c_{0,\ell-1} \\ 
c_{10}& ...&c_{1,\ell-1}  \\ 
\vdots&\ddots&\vdots\\ 
c_{m-1,0}&...&c_{m-1,\ell-1} 
\end{pmatrix} \longmapsto \varphi(c)=(c_{00}, \ldots, c_{m-1,0},\ldots ,c_{0,\ell-1},\ldots, c_{m-1,\ell-1}).
\end{equation}
% We move freely between them... \textcolor{red}{What else ?}
\end{comment}

% and $R^{\ell} = \underbrace{R\times R\times \ldots \times R}_{\ell-times}$ \textcolor{red}{Is it needed?}.
 We factor $x^m-1$ into irreducible polynomials in $\mathbb{F}_q[x]$ as follows:
\begin{equation}\label{factorization}
    x^m-1 = b_1(x)b_2(x)\cdots b_s(x)b_{s+1}(x)
\end{equation}
where we are using the notation from~\cite{cem} and $b_{s+1}(x)=x-1$. Note that this factorization has no repeating factors and, by the Chinese Remainder Theorem, we have the following ring isomorphism:  
\begin{equation}\label{eq:crtiso} 
R \cong \bigoplus_{i=1}^{s+1} \mathbb{F}_q[x]/\langle b_i(x)\rangle.
\end{equation}
Since each $b_i(x)$ divides $x^m-1$, all roots of $b_i(x)$ are a power of a primitive $m$-th root of unity $\beta$. Thus, $b_i(\beta^{u_i}) = 0$ for some nonnegative integer $u_i$ for $1 \leq i \leq s+1$. We note that $b_{s+1}(1) = 0$, so that $u_{s+1} = 0$.  As each $b_i(x)$ is irreducible, $\mathbb{F}_q[x]/\langle b_i(x) \rangle$ is a finite field extension of $\mathbb{F}_q$, and we write 
\begin{equation}
    \mathbb{F}_q[x]/\langle b_i(x) \rangle =\mathbb{F}_{q}(\beta^{u_i})
    \end{equation}
so that $[\mathbb{F}_{q}(\beta^{u_i}) : \mathbb{F}_{q}]=\deg(b_i(x))$ for $1 \leq i \leq s+1$. We then have an isomorphism of rings
\begin{equation}\label{eq:fieldsds} 
R \cong \mathbb{F}_q(\beta^{u_1}) \oplus \mathbb{F}_q(\beta^{u_2}) \oplus \cdots \oplus \mathbb{F}_q(\beta^{u_s}) \oplus \mathbb{F}_q
\end{equation}
where $a(x) \mapsto (a(\beta^{u_1}),\ldots, a(\beta^{u_s}), a(1))$ for any $a(x) \in R$. This yields an isomorphism $\iota$ where
\begin{equation}\label{eq:rellds} 
R^\ell \stackrel{\iota}{\cong} \mathbb{F}_q^\ell(\beta^{u_1}) \oplus \mathbb{F}_q^\ell(\beta^{u_2}) \oplus \cdots \oplus \mathbb{F}_q^\ell(\beta^{u_s}) \oplus \mathbb{F}_q^\ell
\end{equation}
and any QC code $C \subseteq R^\ell$ can be viewed as an $\left( \mathbb{F}_q(\beta^{u_1})  \oplus \cdots \oplus \mathbb{F}_q(\beta^{u_s}) \oplus \mathbb{F}_q \right)$\\-submodule of $\mathbb{F}_q^\ell(\beta^{u_1}) \oplus \mathbb{F}_q^\ell(\beta^{u_2}) \oplus \cdots \oplus \mathbb{F}_q^\ell(\beta^{u_s}) \oplus \mathbb{F}_q^\ell$ with decomposition
\begin{equation}\label{eq:constituent} 
C \cong C_1 \oplus \cdots \oplus C_s \oplus C_{s+1},
\end{equation}
where each $C_i$ is a linear code of length $\ell$ over $\mathbb{F}_{q}(\beta^{u_i})$ for each $1 \leq i \leq s+1$. We call the codes $C_i$ the \textit{constituent codes} of $C$  which will play a fundamental role in the construction of (almost) optimal QC LRCs in Section~\ref{examples}.  The authors in \cite{cem} provided the following lemma that allows us to describe each constituent code as a $\mathbb{F}_{q} (\beta^{u_i})$-vector space of $\mathbb{F}_{q} ^{\ell}(\beta^{u_i})$ for each $1 \leq i \leq s+1$.
\begin{lemma}\cite[Lemma 3.1]{cem}\label{span_of_constituents}
Let $C \subset R^\ell$ be generated as an $R$-module by
$$\{(a_{1,1}(x),...,a_{1,\ell}(x)),\ldots (a_{t,1}(x),...,a_{t,\ell}(x))\} \subset R^\ell.$$
Then 
\begin{itemize}\item[(i)] $
    C_j  = \mathrm{Span}_{\mathbb{F}_{q}(\beta^{u_j})} \{ (a_{i,1}(\beta^{u_j}),\ldots,a_{i,\ell}(\beta^{u_j})) \; \mid 1 \leq i \leq t \}, \, \text{for } 1 \leq j \leq s, C_{s+1}  = \mathrm{Span}_{\mathbb{F}_{q}} \{ (a_{i,1}(1),\ldots,a_{i,\ell}(1) \; \mid 1 \leq i \leq t \}$.

\item[(ii)] For any $j \in \{1,...,s+1\}, C_j=\{0\}$ if and only if $b_{j}(x)$ divides $a_{i,k}(x)$ in $\mathbb{F}_q[x]$ for all $i=1,...,t$ and $k=1,...,\ell$.
\end{itemize}
\end{lemma}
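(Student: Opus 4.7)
The plan is to unwind the construction of the constituent codes directly through the explicit Chinese Remainder Theorem isomorphism displayed in equation~\eqref{eq:fieldsds}, and then track what happens to a chosen generating set of $C$ under this isomorphism.

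For part (i), I would start by recalling that the CRT map sends $a(x) \in R$ to $(a(\beta^{u_1}),\ldots,a(\beta^{u_s}),a(1))$, and that this extends coordinatewise to the isomorphism $\iota\colon R^{\ell} \to \bigoplus_{j=1}^{s+1} \mathbb{F}_q^{\ell}(\beta^{u_j})$ of equation~\eqref{eq:rellds}. Applying $\iota$ to a generator $(a_{i,1}(x),\ldots,a_{i,\ell}(x))$ yields a tuple whose $j$-th component (for $1 \le j \le s$) is the vector $(a_{i,1}(\beta^{u_j}),\ldots,a_{i,\ell}(\beta^{u_j})) \in \mathbb{F}_q^{\ell}(\beta^{u_j})$, and whose $(s+1)$-st component is $(a_{i,1}(1),\ldots,a_{i,\ell}(1)) \in \mathbb{F}_q^{\ell}$. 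Since $C$ is generated as an $R$-module by the given $t$ elements, $\iota(C)$ is generated as an $\iota(R)$-module by their images; and because the action of $R$ on the $j$-th summand factors through the quotient $R/\langle b_j(x) \rangle = \mathbb{F}_q(\beta^{u_j})$, the $j$-th projection of $\iota(C)$ is precisely the $\mathbb{F}_q(\beta^{u_j})$-span of the evaluated generators. This projection is by definition the constituent code $C_j$, so the equality in (i) follows.

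For part (ii), I would use the characterization of $C_j$ just established. The span of a finite set of vectors over the field $\mathbb{F}_q(\beta^{u_j})$ is trivial if and only if each spanning vector is zero, i.e.\ $a_{i,k}(\beta^{u_j}) = 0$ for all $i,k$. Because $b_j(x)$ is irreducible in $\mathbb{F}_q[x]$ and has $\beta^{u_j}$ as a root, $b_j(x)$ is the minimal polynomial of $\beta^{u_j}$ over $\mathbb{F}_q$; hence an element of $\mathbb{F}_q[x]$ vanishes at $\beta^{u_j}$ iff it is divisible by $b_j(x)$. The case $j = s+1$ is identical, using $b_{s+1}(x) = x-1$ and the root $1$. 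This gives the equivalence in (ii).

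The only subtle point will be making sure the module structures line up correctly in part (i): one needs to observe that although $C$ is a priori only an $R$-module, the CRT decomposition turns it into a module over the product ring, and the $j$-th piece is acted on by the field $\mathbb{F}_q(\beta^{u_j})$, so that ``$R$-span'' on that component collapses to ``$\mathbb{F}_q(\beta^{u_j})$-linear span.'' Beyond this bookkeeping, the rest of the argument is a direct unwinding of definitions and the standard fact about minimal polynomials of roots of unity.
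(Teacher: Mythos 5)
The paper gives no proof of this lemma itself---it is quoted directly from G\"uneri--\"Ozbudak---and your argument is the standard one for it: push the $R$-module generators through the CRT isomorphism, note that the $R$-action on the $j$-th summand factors through $\mathbb{F}_q(\beta^{u_j})$ so the span collapses to an $\mathbb{F}_q(\beta^{u_j})$-linear span, and use that $b_j(x)$ is the minimal polynomial of $\beta^{u_j}$ for part (ii). Your proof is correct and matches the intended approach.
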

The following dimension count will be used frequently throughout this paper, so we record it as a lemma.  
%%commenting out this remark since we aren't doing anything really fancy here other than counting dimnsion  
%To best of our knowledge, we formalize for the first time the following lemma that uses the dimension of each constituent code to compute the dimension of the respective QC code $C$.
%
% \begin{lemma}\label{dimmension_lemma}
% Let $C\subseteq \mathbb{F}_q ^{ml}$ be a $(ml,k)$-quasicyclic code and $C_{i_j}\subseteq\mathbb{F}_q (\beta^{u_{i_j}})^l$ its respective constituent $\left(l, k_{i_j}\right)-$linear codes, for $1\leq j\leq t$ and $\left\{i_1 , i_2,  ..., i_t\right\}\subseteq \left\{1,2,\ldots, s+1 \right\}$. Thus, the dimension of $C$ is given by $k=\sum_{j=1} ^{t} k_{i_j}\deg\left(b_{i_j}(x)\right).$
% \end{lemma}

% \begin{proof}
% Since each constituent $\left(l, k_{i_j}\right)-$linear code has cardinality $(q^{\deg(b_{i_j} (x))})^{k_{i_j}}=q^{\deg(b_{i_j} (x)) k_{i_j}}$, for $1\leq j\leq t$ and $\left\{i_1 , i_2,  ..., i_t\right\}\subseteq \left\{1,2,\ldots, s+1 \right\}$, then %
% \begin{equation}
% |C_{i_1}\oplus C_{i_2}\oplus\ldots \oplus C_{i_t}|= \prod _{j=1} ^t q^{\deg(b_{i_j} (x)) k_{i_j}} =  q^{\sum_{i=1} ^t \deg(b_{i_j} (x)) k_{i_j}}.
% \end{equation}

% On the other hand, from the $R-$ module  isomorphism described in~\eqref{eq:rellds}, we have $|C_{i_1}\oplus C_{i_2}\oplus\ldots \oplus C_{i_t}|=|C|$. Finally, from the fact that $C$ is a quasicyclic code; in particular, a linear code, therefore $k= \log_q |C| = \sum_{i=1} ^t \deg(b_{i_j} (x)) k_{i_j}$.
% \end{proof}

\begin{lemma}\label{dimmension_lemma}
Let $C\subset \mathbb{F}_q ^{m\ell}$ be a $[m\ell,k]$ QC code and let $C_{j}\subseteq\mathbb{F}_q ^\ell (\beta^{u_{j}})$ be its respective constituent $[\ell, k_{j}]$ linear codes.  Reordering if necessary, suppose the nontrivial constituent codes of $C$ are given by $C_1,\ldots ,C_h$, where $h\leq s+1$. Then, 
\begin{equation}\label{eq: dimC}
k = \sum_{j=1}^h k_j \deg(b_j(x)).
\end{equation}
\end{lemma}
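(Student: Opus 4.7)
The plan is to extract the dimension identity directly from the ring-isomorphism decomposition of $C$ given in~\eqref{eq:constituent}, viewing each side as a vector space over $\mathbb{F}_q$ and using multiplicativity of dimension in a tower of fields.

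First I would note that the isomorphism $\iota$ in~\eqref{eq:rellds}, and hence the induced decomposition $C \cong C_1 \oplus \cdots \oplus C_{s+1}$, is an isomorphism of $\mathbb{F}_q$-vector spaces (the $R$-module structure restricts to an $\mathbb{F}_q$-vector space structure because $\mathbb{F}_q \hookrightarrow R$). Consequently $\dim_{\mathbb{F}_q} C = \sum_{j=1}^{s+1} \dim_{\mathbb{F}_q} C_j$. On the left-hand side, since $C$ is a $[m\ell, k]$ code over $\mathbb{F}_q$, we have $\dim_{\mathbb{F}_q} C = k$.

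Next I would compute each summand. By construction, $C_j$ is a $k_j$-dimensional vector space over the field $\mathbb{F}_q(\beta^{u_j}) \cong \mathbb{F}_q[x]/\langle b_j(x) \rangle$, and this field has degree $\deg(b_j(x))$ over $\mathbb{F}_q$. By the tower law, $\dim_{\mathbb{F}_q} C_j = k_j \cdot \deg(b_j(x))$. For $j = s+1$ this still holds with $\deg(b_{s+1}(x)) = 1$. Summing gives $k = \sum_{j=1}^{s+1} k_j \deg(b_j(x))$. Finally, after reordering so that the nontrivial constituents are $C_1,\ldots,C_h$, the trivial constituents contribute $k_j = 0$, yielding the stated formula~\eqref{eq: dimC}.

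I do not expect any real obstacle here: the main content is recognizing that the Chinese Remainder decomposition in~\eqref{eq:fieldsds}--\eqref{eq:constituent} respects the underlying $\mathbb{F}_q$-linear structure, after which the identity is a one-line application of $\dim_{\mathbb{F}_q} V = [\mathbb{F}_q(\beta^{u_j}) : \mathbb{F}_q] \cdot \dim_{\mathbb{F}_q(\beta^{u_j})} V$ applied summand-wise. The only place to be careful is flagging that trivial constituents (those with $k_j = 0$, equivalently those for which $b_j(x)$ divides every generating polynomial by Lemma~\ref{span_of_constituents}(ii)) do not contribute to the sum, so restricting the index range to $1 \leq j \leq h$ is harmless.
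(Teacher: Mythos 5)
Your proposal is correct and follows essentially the same route as the paper: the paper likewise identifies $C_j \cong (\mathbb{F}_q[x]/(b_j(x)))^{k_j}$, concludes $\dim_{\mathbb{F}_q}(C_j) = k_j \deg(b_j(x))$, and sums over the constituents. Your version is slightly more explicit about the CRT decomposition being $\mathbb{F}_q$-linear and about the trivial constituents contributing zero, but the substance is identical.
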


\begin{proof}
Since $\mathbb{F}^{\ell} _q(\beta^{u_j}) \cong (\mathbb{F}_q[x]/(b_j(x)))^\ell,$
we have 
\begin{equation*}
C_j \cong (\mathbb{F}_q[x]/(b_j(x)))^{k_j}.  
\end{equation*}
Then $\dim_{\mathbb{F}_q }(C_j) = k_j\cdot \deg(b_j(x))$. Summing over the dimensions of the constituent codes gives (\ref{eq: dimC}).  
\end{proof}

\begin{comment}
\begin{lemma}\label{dimmension_lemma}
Let $C\subset \mathbb{F}_q ^{m\ell}$ be a $[m\ell,k]$ QC code and let $C_{j}\subseteq\mathbb{F}_q ^\ell (\beta^{u_{j}})$ be its respective constituent $[\ell, k_{j}]$ linear codes, for $1\leq j\leq s+1$. If $C$ has $p$ nontrivial constituent codes, where $p \leq s+1$, then the dimension  $k$ of $C$ is given by
\begin{equation}\label{dimension_lemma_equation}
    k=\sum_{j=1} ^{s+1} k_{j}\deg\left(b_{j}(x)\right)=\sum_{j=1} ^{p} k_{j}\deg\left(b_{j}(x)\right).
    \end{equation}
\end{lemma}

\begin{proof}
Since each constituent $\left[\ell, k_{j}\right]$ linear code has cardinality $$(q^{\deg(b_{j} (x))})^{k_{j}}=q^{\deg(b_{j} (x)) k_{j}},$$ for $1\leq j\leq s+1$, then %
\begin{equation}
|C_{1}\oplus C_{2}\oplus\ldots \oplus C_{s+1}|= \prod _{j=1} ^{s+1} q^{\deg(b_{j} (x)) k_{j}} =  q^{ \sum_{j=1} ^{s+1} \deg(b_{j} (x)) k_{j}}.
\end{equation}

On the other hand, from the $R$-module  isomorphism described in~\eqref{eq:rellds}, we have $|C_{1}\oplus C_{2}\oplus\ldots \oplus C_{s+1}|=|C|$. As $C$ is an $\mathbb{F}_q$-linear code, $k= \log_q |C|$ and $k=\sum_{j=1} ^{s+1} k_{j}\deg\left(b_{j}(x)\right).$ If $C_j$ is a trivial constituent code, for some $j$, then $k_j=0$. Thus, if $C$ has $p$ nontrivial constituent codes, then   $k=\sum_{j=1} ^{p} k_{j}\deg\left(b_{j}(x)\right).$
%\sum_{j=1} ^{s+1} \deg(b_{j} (x)) k_{j}$.
\end{proof}

\end{comment}

In~\cite{cem}, the authors construct a lower bound for the minimum distance for QC codes with arbitrarily many constituent codes. This lower bound will be used extensively in Section~\ref{examples} and is notably sharper than the bound found in \cite{lally}.
%This paper utilizes the main results in~\cite{cem} that describe a lower bound on the minimum distance for QC codes with arbitrarily many constituent codes, which is notably sharper than the bound found in \cite{lally}. The bound in~\cite{cem} is constructed as follows: 
For sake of brevity, we only include the main ideas of the lower bound that are used in this paper as the full details can be found in \cite{cem}. The bound is constructed as follows: Let  $\{a_1,...,a_h \}$ be a subset of $\{1,...,s+1\}$. Let $C$ be a QC code with $h$ nonzero constituent codes, i.e.,
\begin{equation}\label{r_constituents}
C \cong C_1 \oplus C_2 \oplus \cdots \oplus C_{h},
\end{equation}
where for each $1 \leq i \leq h$,
\begin{equation}\label{r_constituents_linear}
C_{i}\subset \mathbb{F}_{q}^\ell (\beta^{u_{a_i}})
\end{equation}
is a linear code over $\mathbb{F}_{q}(\beta^{u_{a_i}})$ of length $\ell$. Assume that the minimum distance of the constituent codes satisfy
\begin{equation}\label{descending_minimum_distance}
    d(C_1) \geq d(C_2) \geq \cdots \geq d(C_h).
\end{equation}
The lower bound in~\cite{cem} utilizes the following theorem (\cite[Theorem 5.1]{lingsole}) which formulates the trace representation for QC codes.

\begin{theorem} (Trace Representation of QC Codes) \label{tracerep_theorem}
Assume that $x^m -1$ factors into irreducibles as in~\eqref{factorization}. Let $U_i$ denote the $q$-cyclotomic coset$\mod m$ corresponding to $\mathbb{F}_q (\beta^{u_i})$ for all $i=1,\ldots,  s+1$ (i.e., corresponding to the powers of $\beta$ among the roots of $b_i(x)$).  Fix representatives $u_i\in U_i$ from each cyclotomic coset.  Let $C_i$ over $\mathbb{F}_q (\beta^{u_i})$ be a linear code of length $\ell$ for all $i$.  For codewords $\lambda_i\in C_i$ and for each $0\leq g\leq m-1$, let 

\begin{equation}
c_g := c_g(\lambda_1,\ldots, \lambda_{s+1}) := \sum_{i=1}^{s+1} \text{Tr}_{\mathbb{F}_q (\beta^{u_i})/\mathbb{F}_q}(\lambda_i \beta^{-gu_i})
\end{equation}
where the traces are applied to vectors coordinatewise so that $c_g$ is a vector of length $\ell$ over $\mathbb{F}_q$ for all $0\leq g\leq m-1$. Then the code
\begin{eqnarray}
C &=& \Bigg\{(c_0(\lambda_1,\ldots, \lambda_{s+1}),\ldots, c_{m-1}(\lambda_1,\ldots, \lambda_{s+1}))\nonumber \\
&=& \begin{pmatrix} c_0(\lambda_1,\ldots, \lambda_{s+1})\\
\vdots\\
c_{m-1}(\lambda_1,\ldots, \lambda_{s+1})\end{pmatrix} : \lambda_i \in C_i \Bigg\}
\end{eqnarray}
is a QC code over $\mathbb{F}_q$ of length $m\ell$ and index $\ell$. Conversely, every QC code of length $m\ell$ and index $\ell$ is obtained through this construction.
\end{theorem}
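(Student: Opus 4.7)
The plan is to establish the trace formula by combining the Chinese Remainder Theorem isomorphism~\eqref{eq:rellds} with the discrete Fourier transform at the $m$-th roots of unity, and then regrouping the inverse DFT sum according to $q$-cyclotomic cosets modulo $m$.

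First, I would start from the $R$-module isomorphism $R^\ell \stackrel{\iota}{\cong} \bigoplus_{i=1}^{s+1} \mathbb{F}_q^\ell(\beta^{u_i})$ recalled in~\eqref{eq:rellds}. Under $\iota$, any polynomial vector $\vec{\mathbf{c}}(x) = (c_0(x),\ldots,c_{\ell-1}(x)) \in R^\ell$ corresponds to the tuple $(\lambda_1,\ldots,\lambda_{s+1})$ with $\lambda_i = (c_0(\beta^{u_i}),\ldots,c_{\ell-1}(\beta^{u_i})) \in \mathbb{F}_q^\ell(\beta^{u_i})$. As $\vec{\mathbf{c}}(x)$ ranges over a QC code $C \subseteq R^\ell$, the coordinates $\lambda_i$ trace out the constituent codes $C_i$, giving the decomposition $C \cong C_1 \oplus \cdots \oplus C_{s+1}$ of~\eqref{eq:constituent}. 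Thus it remains only to identify, in closed form, the inverse of $\iota$: given $(\lambda_1,\ldots,\lambda_{s+1})$, what are the coefficients $c_{g,j}$ of the reconstructed $\vec{\mathbf{c}}(x)$?

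Next, I would invert the evaluation map coordinatewise. For a single polynomial $a(x) = \sum_{g=0}^{m-1} c_g x^g \in R$, the discrete Fourier inversion at the $m$-th roots of unity (valid because $\gcd(m,q)=1$) gives $c_g = m^{-1} \sum_{j=0}^{m-1} a(\beta^j)\,\beta^{-gj}$. The main computation is to regroup this sum along the $q$-cyclotomic cosets $U_i$: for $j \in U_i$ we have $j \equiv u_i q^t \pmod{m}$ for some $t$, so $\beta^j = \sigma^t(\beta^{u_i})$ where $\sigma$ denotes the Frobenius of $\mathbb{F}_q(\beta^{u_i})/\mathbb{F}_q$, and since $a(x) \in \mathbb{F}_q[x]$ also $a(\beta^j) = \sigma^t(a(\beta^{u_i}))$. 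Hence
\[
\sum_{j \in U_i} a(\beta^j)\,\beta^{-gj} \;=\; \sum_{t=0}^{|U_i|-1} \sigma^t\!\bigl(a(\beta^{u_i})\,\beta^{-gu_i}\bigr) \;=\; \text{Tr}_{\mathbb{F}_q(\beta^{u_i})/\mathbb{F}_q}\!\bigl(a(\beta^{u_i})\,\beta^{-gu_i}\bigr),
\]
using $|U_i| = \deg b_i(x) = [\mathbb{F}_q(\beta^{u_i}):\mathbb{F}_q]$. Summing over $i$ and rescaling the $\lambda_i$ by the constant $m^{-1}$ (which merely reparametrises the constituent codes without changing them as sets) yields exactly the formula for $c_g$ in the statement, applied componentwise across the $\ell$ polynomial coordinates.

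Finally, the converse is essentially a bookkeeping consequence of the CRT: any QC code of length $m\ell$ and index $\ell$ over $\mathbb{F}_q$ is, by definition, an $R$-submodule of $R^\ell$, and therefore decomposes via~\eqref{eq:rellds} as a direct sum of $\mathbb{F}_q(\beta^{u_i})$-linear codes $C_i$ of length $\ell$, to which the trace formula applies verbatim; that the construction produces a QC code is immediate from the fact that the output lies in the image of $\iota^{-1}$, which is $R^\ell$ itself. The step I expect to be the most delicate is the cyclotomic-coset regrouping together with the scaling bookkeeping, namely matching the Frobenius orbits of the roots of $b_i(x)$ with the trace of $\mathbb{F}_q(\beta^{u_i})/\mathbb{F}_q$ and absorbing the factor $m^{-1}$ so that the $\lambda_i$'s range exactly over $C_i$ rather than over a rescaled copy of it.
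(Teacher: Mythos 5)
Your argument is correct, and it reconstructs the standard proof: the paper itself states this theorem without proof, citing it as Theorem 5.1 of Ling--Sol\'e, and the argument there is exactly the one you give --- the CRT decomposition~\eqref{eq:rellds} combined with discrete Fourier inversion at the $m$-th roots of unity (valid since $\gcd(m,q)=1$), with the sum regrouped along $q$-cyclotomic cosets so that each Frobenius orbit collapses to a trace. Your handling of the two delicate points is also right: the factor $m^{-1}$ lies in $\mathbb{F}_q$, so it can be pulled inside each trace and absorbed into $\lambda_i$ without changing the constituent codes as sets, and the forward direction follows because the tuples $(\lambda_1,\ldots,\lambda_{s+1})$ form a submodule whose preimage under $\iota$ is an $R$-submodule of $R^\ell$, hence quasi-cyclic (with the usual caveat, already present in the statement, that the true index is only guaranteed to divide $\ell$).
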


%
%\sout{Consider the set $\{1,2,...,h\}$ and let $D:=D_{1,2,..,h}$ be a $q$-ary cyclic code of length $m$ obtained by the  trace representation of QC codes in Theorem \ref{tracerep_theorem}, where the details of $D$ can be found in \cite{cem}. We call $D$ the \emph{cyclic code associated to $C$}. According to~\cite[Proposition 4.6-(ii)]{cem}, the columns of any codeword $c \in C$ lie in $D$, which will be useful in Section \ref{locality} when describing the locality of QC LRCs.}

Note that in Theorem \ref{tracerep_theorem}, the trace-formulas are taken in different finite extension fields defined by the irreducible factors of $x^m -1 \in \mathbb{F}_q [x]$. However, by ~\cite[Lemma 4.1]{cem}, we can obtain trace-formulas from an unique finite extension field $\mathbb{F}_{q^w}$ such that $\mathbb{F}_{q} (\beta^{u_{a_i}}) \subset \mathbb{F}_{q^w}$, for  $1\leq i\leq h$, provided that there exist elements $w_i \in \mathbb{F}_{q} (\beta^{u_{a_i}})$ where $\mathrm{Tr}_{ \mathbb{F}_{q^w} / \mathbb{F}_{q} (\beta^{u_{a_i}})} (w_i) =1$. Then by~\cite[Proposition 4.6-(ii)]{cem}, the columns of any codeword $\textbf{c} \in C$ lie in a $q$-ary cyclic code $D:=D_{1,2,..,h}$ of length $m$, obtained by the trace representation of QC codes, whose dual's basic zero set is
$$BZ(D^{\perp})=\{ \beta^{-u_{a_{i}}} \mid i=1,2,...,h\}.$$ We call $D$ the \emph{cyclic code associated to $C$}, which will be useful in Section \ref{locality} when describing the locality of QC LRCs.
%
%Note that if $I_1 \neq \emptyset$ and $I_1 \subset I_2 \subseteq \{1,2,...,r\}$, then $D_{I_1} \subset D_{I_2}$, and thus $d(D_{I_1}) \geq d(D_{I_2})$.
Let $I=\{i_1,i_2,...,i_t\}$ be a nonempty subset of $\{1,2,...,h\}$ where $$1 \leq i_1 < i_2 < \cdots < i_t \leq h.$$ Let $D_{I}=D_{i_1,...,i_t} \subset D$ be the $q$-ary cyclic code of length $m$ whose dual's basic zero set is
\begin{equation}\label{bz}
BZ(D_{I}^{\perp})=\{ \beta^{-u_{a_i}} \mid i \in I\}.
\end{equation}
Define the following quantity:
\begin{align*}
R_I &= R_{i_1,...,i_t}\\
& := \begin{cases}
d(C_{i_1})d(D_{i_1}) &, \text{if $t=1$} \\
\\
\left(d(C_{i_1})-d(C_{i_2})\right)d(D_{i_1})+\left(d(C_{i_2})-d(C_{i_3})\right)d(D_{i_1,i_2})+ \\
\hspace*{4.5cm}\vdots &, \text{if $t\geq 2$}\\
+ \left(d(C_{i_t-1})-d(C_{i_t})\right)d(D_{i_1,i_2,...i_{t-1}})+d(C_{i_t})d(D_{i_1,i_2,...,i_t}).
\end{cases}
\end{align*}
Using the $R_I$, we have the following bound for the minimum distance of a QC code with $h$ constituent codes.

\begin{theorem}\cite[Theorem 4.8]{cem}\label{GO_Bound_General} Let $C$ be a QC code as in (\ref{r_constituents}) and (\ref{r_constituents_linear}) with the assumption (\ref{descending_minimum_distance}). Then we have
\begin{equation}\label{GO_Bound}
    d(C) \geq  d_{\text{GO}}(C):= \min\{R_{h}, R_{h-1,h}, R_{h-2,h-1,h},...,R_{1,2,...,h} \} .
\end{equation}
\end{theorem}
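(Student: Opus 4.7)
The plan is to take an arbitrary nonzero $\mathbf{c} \in C$, write it via the trace representation of Theorem \ref{tracerep_theorem} using constituent data $(\lambda_1, \ldots, \lambda_h)$, and bound its Hamming weight column by column in the array representation \eqref{codeword}. Let $I = \{i_1 < i_2 < \cdots < i_t\} \subseteq \{1, \ldots, h\}$ record the indices for which $\lambda_i \neq 0$, and for each column index $j$ let $k(j)$ be the largest $k$ with $\lambda_{i_k, j} \neq 0$ (with $k(j) = 0$ when the $j$-th column is identically zero). Using the basic-zero-set description \eqref{bz} together with the CRT decomposition of $\mathbb{F}_q[x]/\langle x^m - 1\rangle$, one checks that whenever $k(j) = k$ the column $(c_{g,j})_{g=0}^{m-1}$ is a \emph{nonzero} codeword of the cyclic subcode $D_{i_1, i_2, \ldots, i_k}$: the contributions from $\lambda_{i_{k+1}}, \ldots, \lambda_{i_t}$ vanish on the $j$-th column, while the nonvanishing of $\lambda_{i_k, j}$ guarantees a nonzero CRT projection onto the $i_k$-factor. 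Each such column therefore has Hamming weight at least $d(D_{i_1, \ldots, i_k})$.

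Setting $T_k = \{j : k(j) = k\}$, summing the per-column bounds gives $\mathrm{wt}(\mathbf{c}) \geq \sum_{k=1}^t |T_k|\, d(D_{i_1, \ldots, i_k})$, and the containment $\mathrm{supp}(\lambda_{i_k}) \subseteq \bigcup_{k' \geq k} T_{k'}$ together with $\mathrm{wt}(\lambda_{i_k}) \geq d(C_{i_k})$ yields the cumulative constraints $|T_k|+|T_{k+1}|+\cdots+|T_t| \geq d(C_{i_k})$ for each $k$. Combining these with the two monotonicities $d(C_{i_1}) \geq \cdots \geq d(C_{i_t})$ (by assumption) and $d(D_{i_1, \ldots, i_{k-1}}) \geq d(D_{i_1, \ldots, i_k})$ (from $D_{i_1, \ldots, i_{k-1}} \subseteq D_{i_1, \ldots, i_k}$, obtained by comparing dual basic zero sets) via a telescoping/exchange argument — the technical heart of \cite[Theorem 4.8]{cem}, which exploits the structural link between the per-column subcodes and the constituent supports rather than only the envelope constraints on the $|T_k|$'s — yields precisely $\mathrm{wt}(\mathbf{c}) \geq R_I$.

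Finally, to reduce $R_I$ to one of the suffix $R$-values indexing $d_{\mathrm{GO}}(C)$, a direct computation shows that inserting any missing index $j$ with $i_1 \leq j \leq h$ into $I$ cannot increase $R_I$: the difference $R_{I \cup \{j\}} - R_I$ factors as a product of a nonnegative term (from the ordering of the $d(C_{\cdot})$'s) and a nonpositive term (from the shrinking of $d(D_{\cdot})$ under enlargement of the dual basic zero set). Iterating until $I$ grows to the suffix $\{i_1, i_1+1, \ldots, h\}$ yields $R_I \geq R_{\{i_1, \ldots, h\}} \geq d_{\mathrm{GO}}(C)$, and hence $\mathrm{wt}(\mathbf{c}) \geq d_{\mathrm{GO}}(C)$. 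The main obstacle is the middle step: extracting the full $R_I$ rather than the weaker bound $d(C_{i_1})\, d(D_{i_1, \ldots, i_t})$ that a naive linear-programming relaxation on the $|T_k|$'s would give requires genuinely using the CRT decomposition and the trace representation structure to relate the column-subcode memberships to the support patterns of the $\lambda_{i_k}$'s, not merely the column-wise weight inequalities in isolation.
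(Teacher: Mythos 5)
First, note that the paper does not prove this statement at all: Theorem~\ref{GO_Bound_General} is imported verbatim from \cite[Theorem 4.8]{cem}, so there is no in-paper proof to compare against. Judged on its own terms, your outline gets the architecture right and several pieces are correct: the reduction to bounding $\mathrm{wt}(\mathbf{c})$ for a fixed nonzero support pattern $I=\{i_1<\cdots<i_t\}$ of the constituent data; the observation that a column whose largest nonzero constituent index is $i_k$ is a nonzero codeword of $D_{i_1,\ldots,i_k}$ (correct, since the set of indices contributing to that column is contained in $\{i_1,\ldots,i_k\}$ and the nested chain $D_{i_1}\subseteq D_{i_1,i_2}\subseteq\cdots$ runs the right way); and the final monotonicity step $R_{I}\geq R_{\{i_1,i_1+1,\ldots,h\}}\geq d_{\mathrm{GO}}(C)$, which is a routine telescoping computation using $d(C_{i_k})\geq d(C_{i_{k+1}})$ and $d(D_{i_1,\ldots,i_{k-1}})\geq d(D_{i_1,\ldots,i_k})$.

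The genuine gap is the middle step, $\mathrm{wt}(\mathbf{c})\geq R_I$, which is the entire content of the theorem. As you yourself observe, the data you actually establish --- the per-column bounds $\mathrm{wt}(\text{col }j)\geq d(D_{i_1,\ldots,i_{k(j)}})$ together with the envelope constraints $|T_k|+\cdots+|T_t|\geq d(C_{i_k})$ --- only yield the linear-programming optimum $d(C_{i_1})\,d(D_{i_1,\ldots,i_t})$ (achieved by concentrating all mass in $T_t$, i.e., by letting all constituent supports coincide), and this is strictly smaller than $R_I$ whenever the $d(C_{i_k})$ are not all equal and the chain of $d(D_{i_1,\ldots,i_k})$ strictly decreases. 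For instance, with $t=2$, $d(C_{i_1})=3$, $d(C_{i_2})=2$, $d(D_{i_1})=4$, $d(D_{i_1,i_2})=1$, the LP gives $3$ while $R_{i_1,i_2}=6$. So some additional structural input --- ruling out, or compensating for, the fully-overlapping-support configuration --- is indispensable, and your proposal supplies it only by gesturing at ``a telescoping/exchange argument --- the technical heart of \cite[Theorem 4.8]{cem}.'' Citing the theorem being proved for its own hardest step is circular in a blind reconstruction; until that argument is actually exhibited (in \cite{cem} it is an induction on the number of constituents that exploits the algebraic relation between the columns and the constituent projections, not just the weight inequalities you list), what you have is a correct frame around a missing core rather than a proof.
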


The examples in Section~\ref{examples} make use of Theorem~\ref{GO_Bound_General} in the special cases of two and three constituent codes. In particular, if the number of nonzero constituent codes is two, the inequality in (\ref{GO_Bound}) becomes
\begin{equation}\label{GO_Bound_2constituents}
    d(C) \geq \min\{ d(C_2)d(D_2), \left(d(C_1)-d(C_2)\right)d(D_1)+d(C_2)d(D)\},
\end{equation}
where $D=D_{1,2}$. If the number of nonzero constituent codes is three, the inequality in  (\ref{GO_Bound}) becomes
\begin{equation}\label{GO_Bound_2constituents1}
d(C) \geq  \min\{R_3 , R_{2,3} , R_{1,2,3} \}, 
\end{equation}
where 
\begin{eqnarray*}
R_3 &=& d(C_3) d(D_3)\\
R_{2,3}&=&(d(C_2) - d(C_3) )d(D_2) + d(C_3) d(D_{2,3})\\
R_{1,2,3} &=& (d(C_1) - d(C_2) )d(D_1) + (d(C_2) -d(C_3) )d(D_{1,3}) + d(C_3) d(D),
\end{eqnarray*}
and $D=D_{1,2,3}$. It is worth noting that increasing the number of finite extension fields and, consequently, the number of non-zero constituent codes, requires a  higher computational complexity to compute the lower bound to the minimum distance provided by Theorem~\ref{GO_Bound_General}. 

\section{The Locality of QC LRCs}\label{locality} 

In this section, we provide an upper bound to the locality of an $m\ell$-length QC LRC $C$ using its cyclic code $D$ as described in Section~\ref{preliminaries}. We first state some important results on the locality of cyclic codes that will be used throughout this paper.

%\textcolor{magenta}{Include paragraphs about locality of cyclic codes. The locality of cyclic codes have been well-studied. Some important results we are using are below:}

\begin{lemma}\cite[Lemma 2]{luo}\label{dualcode_locality}
Let $C$ be a $q$-ary transitive linear code with dual minimum distance $d^{\perp}=d(C^\perp)$. If $d^{\perp} \leq r+1$, then $C$ has locality $r$.
\end{lemma}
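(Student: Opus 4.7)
The plan is to exploit the dual-distance hypothesis to produce a short parity-check equation at one coordinate, and then transport that equation to every other coordinate using the transitive automorphism action.

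First I would pick a nonzero vector $h \in C^\perp$ whose Hamming weight equals $d^\perp \le r+1$. Because $h \in C^\perp$, the equation $\langle h, c\rangle = 0$ holds for every $c \in C$, and it involves only the coordinates in the support $S := \mathrm{supp}(h)$, where $|S| = d^\perp \le r+1$. Fixing any $i_0 \in S$, I can solve this parity check for $c_{i_0}$ as an $\mathbb{F}_q$-linear combination of $\{c_j : j \in S \setminus \{i_0\}\}$, a set of size at most $r$. Hence coordinate $i_0$ admits a recovery set of size $\le r$, i.e. $r_{i_0} \le r$.

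Next I would promote this to every coordinate using transitivity. For an arbitrary coordinate $j \in \{1,\ldots,n\}$, transitivity of $\mathrm{Aut}(C)$ furnishes a coordinate permutation $\sigma$ with $\sigma(i_0) = j$ and $\sigma(C) = C$. Such a $\sigma$ also preserves the dual, since $\sigma(C)^\perp = \sigma(C^\perp)$ under the standard inner product (permutations are orthogonal). Therefore $\sigma(h) \in C^\perp$ has the same weight as $h$, and its support is $\sigma(S)$, which contains $j$. The same argument as in the previous step, applied to $\sigma(h)$ and the coordinate $j$, yields a recovery set of size at most $r$ for $c_j$, so $r_j \le r$.

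Taking the maximum over all coordinates gives $r = \max_i r_i \le r$, which is exactly the statement that $C$ has locality at most $r$ (and the hypothesis $d^\perp \le r+1$ is what bounds it by $r$ rather than something larger). The only conceptual step that needs care is the observation that a coordinate permutation preserving $C$ automatically preserves $C^\perp$; everything else is a direct reading of ``parity check $=$ recovery equation,'' together with the orbit generated by $\mathrm{Aut}(C)$ covering all coordinates.
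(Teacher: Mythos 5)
Your argument is correct and is the standard proof of this fact: a minimum-weight dual codeword gives a parity check recovering any coordinate in its support from the at most $r$ remaining support positions, and transitivity (together with the observation that a coordinate permutation fixing $C$ also fixes $C^\perp$) transports that recovery set to every coordinate. The paper itself offers no proof here — the lemma is quoted directly from the cited reference — so there is nothing to contrast your approach with; your write-up matches the expected argument.
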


\begin{remark}\label{rem1}
In particular, if $C$ is a cyclic code, then $r=d^{\perp} -1$. 
\end{remark}
While QC codes are generalizations of the familiar cyclic codes, 
 the authors in  \cite{GO2012, Lim, Shi-Zhang-Sole16} showed the following:
 \begin{theorem}\cite[Theorem 1]{Shi-Zhang-Sole16}\label{Shi-Zhang-Sole16}
A quasi-cyclic code $C$ of length $m\ell $ and index $\ell$ with cyclic constituent codes is equivalent to a cyclic code.     
 \end{theorem}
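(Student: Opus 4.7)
The plan is to show that $C$ is invariant under two commuting coordinate actions on the $m\times\ell$ array representation --- the standard QC shift and a second shift arising from cyclicity of constituents --- and then to use a Chinese Remainder Theorem permutation of coordinates to convert these two shifts into a single standard cyclic shift of the length-$m\ell$ codeword.

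First I would identify the two invariances on the array form of codewords given in (\ref{codeword}). By definition of a QC code, $C$ is closed under $T^\ell$, which the paper identifies with the row shift $\rho\colon (i,j)\mapsto (i+1\bmod m,\,j)$. For the second invariance, view $C$ as an $R$-submodule of $R^\ell$ and consider the shift $S\colon (c_0(x),\ldots,c_{\ell-1}(x))\mapsto(c_{\ell-1}(x),c_0(x),\ldots,c_{\ell-2}(x))$. Under the coordinatewise CRT isomorphism of (\ref{eq:rellds}), $S$ restricts to the standard cyclic shift on each constituent space $\mathbb{F}_q^\ell(\beta^{u_i})$; thus $C$ is $S$-invariant if and only if each constituent $C_i$ is cyclic. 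Translating $S$ back to the array, it acts as the column shift $\kappa\colon(i,j)\mapsto(i,\,j+1\bmod\ell)$. Hence $C$ is invariant under the composition $\rho\kappa\colon(i,j)\mapsto(i+1\bmod m,\,j+1\bmod\ell)$.

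Assuming $\gcd(m,\ell)=1$ (which the ambient hypothesis $\gcd(m,q)=1$ plus the existence of cyclic constituent codes of length $\ell$ should be compatible with in the cases of interest), I would introduce the bijection $\psi\colon\mathbb{Z}/m\mathbb{Z}\times\mathbb{Z}/\ell\mathbb{Z}\to\mathbb{Z}/m\ell\mathbb{Z}$ guaranteed by CRT, where $\psi(i,j)$ is the unique element of $\{0,1,\ldots,m\ell-1\}$ congruent to $i\bmod m$ and $j\bmod\ell$. Define the coordinate permutation $\sigma$ on $\mathbb{F}_q^{m\ell}$ by sending the array entry at position $(i,j)$ to position $\psi(i,j)$ of a new length-$m\ell$ vector. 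A direct computation gives $\psi(i+1,j+1)\equiv\psi(i,j)+1\pmod{m\ell}$, so the diagonal shift $\rho\kappa$ conjugates through $\sigma$ to the standard cyclic shift $T$ of length $m\ell$.

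Putting these pieces together, $\sigma(C)$ is a linear code over $\mathbb{F}_q$ of length $m\ell$ that is invariant under $T$, hence cyclic, and is equivalent to $C$ by construction. The step I expect to be the main obstacle is verifying the column-shift invariance rigorously: one must check that the shift $S$ on $R^\ell$ commutes with the CRT isomorphism componentwise so that $S$-invariance of $C$ is exactly cyclicity of every constituent, and then carefully translate $S$ through the flattening map (\ref{phi}) to obtain the advertised column shift on the array. The CRT-permutation step in the second half is then a clean counting argument modulo $m$ and modulo $\ell$.
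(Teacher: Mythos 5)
The paper offers no proof of this statement---it is quoted verbatim as \cite[Theorem 1]{Shi-Zhang-Sole16}---so there is no internal argument to compare against; your proposal must stand on its own. Its first half does: the isomorphism $\iota$ in~(\ref{eq:rellds}) is applied slot-by-slot to the $\ell$ components of $R^\ell$, so the cyclic permutation $S$ of those components commutes with $\iota$, and $C$ is closed under $S$ (equivalently, under the column shift of the array~(\ref{codeword}), via the flattening~(\ref{phi})) precisely when every constituent $C_i$ is invariant under the cyclic shift of its $\ell$ coordinates, i.e.\ is cyclic. Together with the defining row-shift invariance, $C$ is invariant under the diagonal permutation $(i,j)\mapsto(i+1\bmod m,\,j+1\bmod\ell)$, and when $\gcd(m,\ell)=1$ the Chinese-remainder relabeling $\psi$ does conjugate this diagonal shift to the standard shift $T$ on $\mathbb{F}_q^{m\ell}$. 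This is the classical ``two-dimensional cyclic code'' interleaving argument, and in the coprime case your proof is complete and correct.

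The genuine gap is the hypothesis $\gcd(m,\ell)=1$, which you assume but cannot derive. It does not follow from $\gcd(m,q)=1$, nor from the existence of cyclic constituent codes of length $\ell$ (cyclic codes of every length exist), and it is indispensable to your final step: if $d=\gcd(m,\ell)>1$ then $\psi$ is not a bijection, and more fundamentally the diagonal shift has order $\mathrm{lcm}(m,\ell)=m\ell/d<m\ell$ while the standard shift $T$ has order $m\ell$, so no coordinate permutation whatsoever can conjugate the former to the latter. Your argument therefore cannot be repaired in the non-coprime case by a cleverer relabeling; one would need an entirely different mechanism, and in fact the source theorem is stated in the literature with coprimality-type hypotheses that the quotation above suppresses. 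As written, your proof establishes the theorem only under the additional assumption $\gcd(m,\ell)=1$, and that restriction should be stated as part of the result rather than waved at as ``compatible with the cases of interest.''
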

%

% Hence, it is a worthwhile endeavor to study the locality of cyclic codes as constituent codes of QC codes as it can pave way to a better understanding of the locality of more general cyclic codes. \textcolor{magenta}{This sentence should be rephrased as in this paper, we are not using cyclic constituent codes....right? Maybe replace with the following sentence:}

Theorem~\ref{Shi-Zhang-Sole16} shows the importance of studying the locality of QC codes using constituent codes since, if the constituent codes are cyclic, then we can gain a better understanding of the locality of QC codes and, up to equivalence, more general cyclic codes.

To the best of our knowledge,~\cite{RB} is the first paper that provides new results on the locality of QC codes. In~\cite{RB}, the generator matrix of a QC code $C$ is described as a block matrix in standard form comprised of the identity matrix and circulant matrices. Using this generator matrix, an upper bound of the locality of the QC code is  obtained by using the weights of the polynomials associated to these circulant matrices. (See~\cite[Theorem 1]{RB}.) In order to present an alternative way to describe the locality of QC codes, we will replace the notation used in the results in~\cite{RB} with the notation used in this paper.

%Thus, using constituent codes to study the locality of QC codes can pave way to an understanding of the locality of cyclic codes if the constituents are also cyclic. Moreover, it is a worthwhile endeavor to study locality of QC codes. \textcolor{magenta}{this can be phrased better...}
% \textcolor{magenta}{ blah blah blah and the study of the locality of QC codes can lead to a better underrstanding of cyclic codes (and conversely QC codes.)}

%To the best of our knowledge,~\cite{RB} is the first work that has new results on the locality of QC codes. In there, the generator matrix of a QC code $C$ is described in standard form so that besides the identity block, we have circulant matrix blocks, where a upper-bound to the locality of the respective QC code may be obtained by the weights of the polynomials associated to these circulant matrices (See~\cite[Theorem 1]{RB}). In order to present an alternative way to describe the locality of QC codes, we will adapt the notation used in some results of~\cite{RB} to the notation used in this paper.

\begin{theorem}\cite[Theorem 2]{RB}\label{locRB}
Let $C$ be an $[m\ell,k]$ QC code. If the locality of any one coordinate in the block $\left.\left(m(i-1),mi\right.\right]$, $1\leq i\leq \ell$, is $r_i$ then all coordinates in this block have locality $r_i$.    
\end{theorem}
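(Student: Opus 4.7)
The plan is to produce a coordinate-permutation automorphism of $C$ that acts transitively on each block $(m(i-1),mi]$ of the column-flattened codeword of (\ref{phi}), and then transport any short recovery relation from one coordinate of a block to all of the others.

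First I would unpack how the defining QC shift acts after column-flattening. Writing a codeword as the $m\times\ell$ array in (\ref{codeword}), the paragraph immediately after (\ref{codeword}) identifies $T^\ell$ with the cyclic row shift of the array. When the array is then flattened column by column as in (\ref{phi}), this row shift becomes the permutation $\sigma$ of $\{1,\ldots,m\ell\}$ that cyclically rotates each of the $\ell$ blocks $(m(i-1),mi]$ by one position, independently of the other blocks. (The short index check is: $c_{ij}$ sits at position $jm+i$ in (\ref{phi}), while $T^\ell$ sends the array entry at $(i,j)$ to $(i-1,j)$ with $i-1$ taken mod $m$.) Consequently $T^\ell(C)=C$ translates into $\sigma(C)=C$, so $\sigma$ is a coordinate automorphism of $C$ and $\langle\sigma\rangle$ acts transitively on each block.

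Next I would use $\sigma$ to transport recovery relations. Fix $i\in\{1,\ldots,\ell\}$, let $B_i:=(m(i-1),mi]$, and let $j_0\in B_i$ be the coordinate with hypothesized locality $r_i$, witnessed by distinct coordinates $k_1,\ldots,k_{r_i}\neq j_0$ and a function $f\colon\mathbb{F}_q^{r_i}\to\mathbb{F}_q$ such that $c_{j_0}=f(c_{k_1},\ldots,c_{k_{r_i}})$ for every $\mathbf{c}\in C$. Given another $j\in B_i$, pick $p$ with $\sigma^p(j_0)=j$ and apply the recovery identity to $\sigma^{-p}(\mathbf{c})\in C$ to obtain
\[
c_j \;=\; c_{\sigma^p(j_0)} \;=\; f\bigl(c_{\sigma^p(k_1)},\ldots,c_{\sigma^p(k_{r_i})}\bigr),
\]
so $j$ has locality at most $r_i$ with recovery set $\{\sigma^p(k_1),\ldots,\sigma^p(k_{r_i})\}$. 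The symmetric argument, starting from a minimum recovery relation for $j$ and transporting it by $\sigma^{-p}$, gives the opposite inequality, so the two localities are equal.

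The only place the argument can get delicate is Step~1: the QC property is stated in terms of the row-flattened vector, whereas the block decomposition $(m(i-1),mi]$ refers to the column-flattened vector, so one must verify that $T^\ell$ really restricts to a cyclic permutation within each column-flattened block rather than doing something that mixes blocks. Once that identification is secured, the rest of the argument is just the general fact that a coordinate automorphism of a code preserves the locality of each individual coordinate.
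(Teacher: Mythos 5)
Your proof is correct. Note first that the paper itself gives no proof of this statement: it is quoted verbatim as \cite[Theorem 2]{RB}, so there is no in-text argument to compare against. What you propose is nonetheless a genuinely different route from the one in the cited source, which works with a systematic generator matrix whose blocks are circulant matrices and manipulates the explicit linear recovery equations coming from those circulants. Your argument instead observes that the defining shift $T^\ell$, read through the column-flattening of~(\ref{phi}), is a coordinate automorphism $\sigma$ of $C$ whose cyclic group acts transitively on each block $\left.\left(m(i-1),mi\right.\right]$, and then transports a recovery function along the orbit; the two-sided transport ($\sigma^{p}$ and $\sigma^{-p}$) correctly yields equality of localities rather than a one-sided bound. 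This is cleaner and slightly more general: it never uses linearity of the recovery function, and it isolates the only genuinely delicate point, namely that the QC shift (defined on the row-flattened vector) restricts to a cyclic rotation of each column-flattened block rather than mixing blocks --- which you verify with the index computation $c_{ij}\mapsto$ position $jm+i$. Your argument is also consonant with the paper's own framework, since it is exactly the transitivity mechanism underlying Lemma~\ref{dualcode_locality} applied blockwise.
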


\begin{corollary}\cite[Corollary 1]{RB}\label{corlocRB}
The locality of an $[m\ell, k]$ QC code is 
$
r=\max_{1 \leq i \leq \ell} r_i.
$
\end{corollary}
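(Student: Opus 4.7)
The plan is to derive the corollary directly from the definition of locality combined with Theorem~\ref{locRB}. First, I would recall that the locality of a code $C$ of length $n = m\ell$ is defined as $r = \max_{1 \leq j \leq m\ell} r_j^{\mathrm{coord}}$, where $r_j^{\mathrm{coord}}$ denotes the locality of the $j$-th coordinate. Using the flattening convention in~\eqref{phi}, the $m\ell$ coordinates split naturally into $\ell$ consecutive blocks $(m(i-1), mi]$ of length $m$, where the $i$-th block records one column of the array representation~\eqref{codeword}. These blocks partition $\{1, 2, \ldots, m\ell\}$ and correspond exactly to the orbits of the shift $T^{\ell}$, which is the structural reason coordinates within a block behave uniformly.

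The key input is Theorem~\ref{locRB}: the locality of any coordinate within the $i$-th block is a common value $r_i$, independent of the specific coordinate chosen within that block. Substituting this into the definition, the maximum over the $m\ell$ coordinate-wise localities collapses to a maximum over only $\ell$ block-level localities, yielding
\[
r \;=\; \max_{1 \leq j \leq m\ell} r_j^{\mathrm{coord}} \;=\; \max_{1 \leq i \leq \ell} r_i.
\]

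I do not anticipate a serious obstacle here, as the corollary is essentially a bookkeeping consequence of Theorem~\ref{locRB}. The only minor point to verify is that the $\ell$ blocks truly tile $\{1, \ldots, m\ell\}$ without overlap, which is an immediate index check since $\bigcup_{i=1}^{\ell} (m(i-1), mi] = (0, m\ell]$. The content of the result is therefore concentrated entirely in the preceding theorem, and the corollary simply repackages it as an $\ell$-term (rather than $m\ell$-term) maximum.
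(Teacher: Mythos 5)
Your derivation is correct: the paper states this corollary as a cited result from~\cite{RB} without reproducing a proof, and your argument—partition $\{1,\ldots,m\ell\}$ into the $\ell$ blocks $(m(i-1),mi]$, invoke Theorem~\ref{locRB} to collapse the coordinate-wise maximum to a block-wise maximum—is exactly the standard one-line deduction the citation presupposes. No gaps; the observation that the blocks are the orbits of the induced shift in the flattened representation~\eqref{phi} is a correct (if optional) justification for why Theorem~\ref{locRB} applies blockwise.
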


Based on  Corollary~\ref{corlocRB}, the locality of a QC code $C$ may be obtained as the maximum of the localities provided by any column of $\textbf{c}\in C$. Notice that in this context, we are transitioning between the array and vector representations of $\textbf{c}\in C$ as described in ~\eqref{phi}. In particular, the locality of $C$ is obtained from the locality of the entries in any row of $\textbf{c}\in C$.

%any column may be computed by checking the locality of $c_{0,i}$, for $i=0,1,\ldots,\ell-1.$ 

% \textcolor{magenta}{fix this paragraph: We now address the locality of an $m\ell$-length QC code $C$ using an $m$-length cyclic code associated to $C$. We note that these associations have been precisely described in~\cite{cem}. Before doing so, let us recall the  following classical lower-bound of the locality of transitive linear codes. 

To compute the locality of an $[m\ell,k,d]$ QC code $C$ using the associated cyclic code $D$ of $C$, we assume that the codewords $\textbf{c}$ are described by the trace-formula representation provided in~\cite{cem}, which naturally may also be represented  using the array notation in~\eqref{codeword}. We then see that the columns of $\textbf{c}$ (represented as an array in~(\ref{codeword})), obtained by this trace-formula representation of $C$ from the unique finite extension field, are codewords of the cyclic code $D$ associated to $C$. 
%
%Furthermore, from this interpretation, it is possible to reinterpret the Theorem~\ref{locRB} as well as refine the Corollary~\ref{locRB}. \textcolor{magenta}{This last sentence need fixing as it is not clear.}
%
% It is straightforward to see that the map  $\phi: \mathbb{F}_q ^{ml} \rightarrow R^l$ defined by
% \begin{equation}
%  \begin{array}{cccc} 
%  &c=\begin{pmatrix}
% c_{00}, c_{01},...,c_{0,\ell-1} \\ 
% c_{10}, c_{11},...,c_{1,\ell-1}  \\ 
% \vdots\\ 
% c_{m-1,0}, c_{m-1,1},...,c_{m-1,\ell-1} 
% \end{pmatrix}&\overset{\phi}{\longmapsto}& c(x)= (c_1 (x), \ldots, c_l (x))  
%  \end{array}   
% \end{equation}
% it is a $R-$module isomorphism. \textcolor{magenta}{what is the difference between Equation (10) and the equation on top of page 3?} \textcolor{red}{G.: They are the same} 

% \textcolor{red}{In Theorem \ref{tracerep_theorem}, the trace-formulas are taken in different finite extension fields defined by the irreducible factors of $x^m -1 \in \mathbb{F}_q [x]$. By ~\cite[Lemma 4.1]{cem}, we can obtain trace-formulas from an unique finite extension field that encompasses all previous extension fields, making the computations simpler. IT HAS BEEN DESCRIBED AT PAGE 7, NEW RED PARAGRAPH.}

%Thus, for every codeword $c \in C$, 
% $$
%c=\begin{pmatrix}
%c_{00}& c_{01}&...&c_{0,\ell-1} \\ 
%c_{10}& c_{11}&...&c_{1,\ell-1}  \\ 
%\vdots&\vdots&\ddots&\vdots\\ 
%%c_{m-1,0}& c_{m-1,1}&...&c_{m-1,\ell-1} 
%\end{pmatrix}
%$$
%all columns of $\textbf{c}$ can be seen as codewords of a cyclic code $D$ of length $m$ over $\mathbb{F}_q$. 
The dual code $D^{\perp}$, as defined in~\eqref{bz},
%denoted by $D^{\perp}$, has generator polynomial $\prod_j p_j(x)$, where $p_{j}(x) \in \mathbb{F}_{q}[x]$ is the minimal polynomial of $\xi^{-u_{\alpha_{j}}}$ over $\mathbb{F}_{q}$ and $j =1,2,...,r$.,  exactly defined in~\eqref{bz}
 has  basic zero set 
\begin{equation}\label{Def_BZSet} BZ(D^{\perp})=\{\beta ^{- u_{a_1}}, \beta^{- u_{a_2}}, \ldots, \beta^{- u_{a_h}} \}
\end{equation}
where 
\begin{equation}\label{matrixcodeword}
\{\beta^{ u_{a_1}}, \beta^{ u_{a_2}}, \ldots, \beta^{ u_{a_h}} \} \subset \{\beta^{ u_{1}}, \beta^{ u_{2}}, \ldots, \beta^{ u_{s+1}}\} .
\end{equation}
% \textcolor{red}{G.: (20) somehow describes (19). I indicated to See (7), because in that equation we define who are those $\xi$ s}
As $C$ is a nontrivial code, we note that $D^{\perp} \neq \mathbb{F}_q ^m$ and from Remark~\ref{rem1}, we know that the locality of $D$ is $d(D^{\perp}) -1$. Thus, for any codeword $\textbf{c}$ of $C$, we can flatten the array in~(\ref{codeword}) by the columns and rewrite $\textbf{c}$  as a row vector of length $m\ell$. Moreover, if the codewords $\textbf{d}_0,...,\textbf{d}_{\ell -1} \in D$ represent the $m$-length blocks of $\textbf{c}$, we have that
\begin{equation}
\textbf{c} = (\underbrace{c_{00}, c_{10}, ... ,c_{m-1,0}}_{\textbf{d}_0} ,\ldots , \underbrace{c_{0, l-1}, c_{1,l-1}, ... ,c_{m-1,l-1}}_{\textbf{d}_{\ell -1} }) \in \mathbb{F}_q ^{ml}.
\end{equation}
Using this representation, we note that each coordinate of $\textbf{c}$ belongs to an $m$-length block $\textbf{d}_i$, $0\leq i\leq \ell -1$, where each coordinate in this $i$-th block has the same locality, namely, $\mathrm{loc}(c_{ij})=d(D^{\perp}) - 1.$

The following result can be seen as a restatement of Theorem~\ref{locRB} and Corollary~\ref{corlocRB}.

\begin{proposition}
\label{localityofqc}
Let $C$ be an $[m\ell,k,d]$ QC code with locality $r$. If $D$ is the cyclic code associated to $C$, then $r\leq d(D^{\perp}) -1$.
\end{proposition}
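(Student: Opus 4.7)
The plan is to exploit the fact, established in Section~\ref{preliminaries}, that when a codeword $\textbf{c}\in C$ is written as the $m\times \ell$ array in~\eqref{codeword}, every one of its $\ell$ columns lies in the associated cyclic code $D\subset \mathbb{F}_q^m$. Combined with the column-flattening convention of~\eqref{phi}, this means each of the $\ell$ length-$m$ blocks of the flattened vector is itself a codeword of $D$, so that the local-recovery problem for any single coordinate of $C$ reduces to a local-recovery problem for $D$.

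First I would invoke Corollary~\ref{corlocRB}, which reduces computing $r$ to computing the maximum of the block localities $r_1,\ldots,r_\ell$, where Theorem~\ref{locRB} guarantees that each $r_i$ is independent of which coordinate inside block $i$ is chosen. Fix an index $i\in\{1,\ldots,\ell\}$ and a coordinate $c_{j,i-1}$ of $\textbf{c}$ lying in the $i$-th block. Under the identification between the flattened vector and the array, this coordinate is the $j$-th entry of the $(i-1)$-st column of the array, and that column is a codeword $\textbf{d}_{i-1}\in D$.

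Next I would apply Lemma~\ref{dualcode_locality} together with Remark~\ref{rem1}: because $D$ is a cyclic code (hence transitive) and $D^{\perp}\neq \mathbb{F}_q^m$, the locality of $D$ is exactly $d(D^{\perp})-1$. Therefore the $j$-th coordinate of $\textbf{d}_{i-1}$ can be recovered from at most $d(D^{\perp})-1$ of the remaining entries of $\textbf{d}_{i-1}$. Since these entries are themselves coordinates of $\textbf{c}$ (sitting in the same $i$-th block), this exhibits a recovery set for $c_{j,i-1}$ inside $C$ of size at most $d(D^{\perp})-1$. Hence $r_i\leq d(D^{\perp})-1$.

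Taking the maximum over $i$ and applying Corollary~\ref{corlocRB} yields $r=\max_{1\leq i\leq \ell} r_i \leq d(D^{\perp})-1$, which is the desired inequality. There is no real obstacle here: the content of the argument is entirely the column-wise reduction from $C$ to $D$ plus the standard dual-distance bound for the locality of a transitive code. The only place one must be slightly careful is the bookkeeping between the array and vector representations of $\textbf{c}$ in~\eqref{codeword} and~\eqref{phi}, to make sure that the columns of the array indeed correspond to the length-$m$ blocks of the flattened vector, so that a recovery set inside a column of the array translates to a legitimate recovery set for the corresponding coordinate of the vector viewed as an element of $C\subset\mathbb{F}_q^{m\ell}$.
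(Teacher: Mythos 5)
Your argument is correct and follows essentially the same route as the paper: the columns of the array representation of a codeword lie in the associated cyclic code $D$, whose locality is $d(D^{\perp})-1$ by Lemma~\ref{dualcode_locality} and Remark~\ref{rem1}, and the block structure of Theorem~\ref{locRB} and Corollary~\ref{corlocRB} transfers this bound to $C$. The paper presents exactly this reasoning (via the trace representation of $C$, which identifies the columns as codewords of $D$) in the discussion preceding the proposition rather than as a formal proof.
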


%\textcolor{red}{G.: The Proposition~\ref{localityofqc} proves to be useful for QC codes where either $m$ is small or $C$ has a large number of generators. Indeed, in worst scenario, since $D^{\perp}$ is a cyclic code of length $m$, then $r\leq d(D^{\perp}) -1\leq m-1$. Therefore, we have this straight upper-bound to the locality with no computations. For instance, let $C_1$ the $[m\ell = 4\cdot 4,8]-$QC code described in ~\cite[Ex. 1]{RB} , $C_2$ the $[4\cdot 4,4]-$QC code described in~\cite[Ex. 2]{RB}, and $C_6$ the $24=8 \cdot 3-$length QC code described in (\cite[Ex. 6]{RB}. According to~\cite[Theorem 1]{RB}, these QC codes have localities $r_1 \leq 4$, $r_2 = 3$, and $r_6 =4$, respectively. On the other hand, based on Proposition~\ref{localityofqc}, all these localities are upper-bounded by $4$.}

Proposition~\ref{localityofqc} is useful for a QC LRC $C$ where either $m$ is small or $C$ has a large number of generators. In the most extreme case, since $D^{\perp}$ is a cyclic code of length $m$, then $r\leq d(D^{\perp}) -1\leq m-1$. As this upper bound for the locality only depends on $m$, we have a straightforward upper bound that involves no computation, especially when compared to the work in~\cite{RB}. For example, consider the QC $[m_1\ell_1=4\cdot 4, k_1=8]$ code $C_1$ in~\cite[Example 1]{RB}, the QC $[m_2\ell_2=4 \cdot 4, k_2=4]$ code $C_2$ in~\cite[Example 2]{RB}, and the QC $[m_3\ell_3=5 \cdot 3, k_3=5]$ code $C_3$ in \cite[Example 3]{RB}. By~\cite[Theorem 1]{RB}, these QC codes have localities $r_1 \leq 4$, $r_2 \leq 3$, and $r_3 =3$, respectively. However Proposition~\ref{localityofqc} directly tells us that $r_1$ and $r_2$ are upper bounded by $m_1 -1 = m_2 -1  = 3$, and that $r_3$ is upper bounded by $m_3  -1 = 4$.

\section{Constructing Optimal QC LRCs using Constituent Codes} \label{examples}
\begin{comment}
In this section, we present a construction of good QC LRCs that makes strong use of the upper-bound on the locality found in Proposition~\ref{localityofqc}. We note that using this code construction, we are able to construct including optimal and almost optimal QC LRCs that have large code length and dimension while preserving the minimum distance \textcolor{magenta}{of the original constituent codes or respective QC LRC?}. \textcolor{red}{ G.: To preserve a minimum dist. we must be clear from where we are picking it up.}. 
\end{comment}

\par 
To construct an (almost) optimal QC LRC $C$, we measure how far its lower bound  $d_{GO}(C)$ \eqref{GO_Bound} of the minimum distance is from its respective Singleton-type upper bound $d_S(C)$ \eqref{singletontypebound}. To have an almost optimal code, we aim for the difference between $d_{GO}(C)$ and $d_S(C)$  to be equal to $1$, and to have an optimal code, we aim for the equality $d_{GO} (C) = d_{S} (C)$.

In this section, we utilize the constituent codes of a given QC LRC to construct another QC LRC whose lower and upper bounds of the minimum distance meet, namely: given a $[n, k, d]$ QC LRC $C$, we construct a respective $[n', k', d']$ QC LRC $C'$ which is longer and larger than $C$ (i.e., $n' > n$ and $k' > k$) whose respective constituent codes each have the same minimum distance of the constituent codes of $C$. Consequently,  $C'$ inherits the lower bound for its minimum distance---that is, $d_{GO}(C) = d_{GO}(C')$, and the locality of $C'$ is also the same as $C$. With $C'$ having a larger code length, larger dimension, and the same locality as $C$, its respective Singleton-type bound  becomes reduced when compared to $d_S(C)$, i.e., 
\begin{equation}
d_S (C') - d_{GO} (C') = d_S (C') - d_{GO} (C) \leq d_S (C) - d_{GO} (C).
\end{equation}

Therefore, in this section we are interested in constructing a QC LRC code $C'$ that is (almost) optimal. The following example is almost optimal code which uses constituent codes to illustrate the difference, or the ``gap", between the Singleton-type upper bound and  the G\"uneri-\"Ozbudak lower bound for the minimum distance of a QC LRC.

\begin{example}\label{example_optimal}
Let $q=2$, $\ell=3$, $m=7$, and $\beta$ be a primitive $7$-th root of unity. Assume the following decomposition of $R^3 = \left(\frac{\mathbb{F}_2 [x]}{\langle x^7 -1 \rangle}\right)^3$ into its finite extension fields provided by Chinese Remainder Theorem:
\begin{eqnarray*}\label{CRT_decomposition_R^3}
\left(\frac{\mathbb{F}_2 [x]}{\langle x^7 -1 \rangle}\right)^3 &\cong&  \left(\frac{\mathbb{F}_2 [x]}{\langle x^3 +x +1 \rangle} \right)^3 \oplus\left(\frac{\mathbb{F}_2 [x]}{\langle x^3 +x^2 +1 \rangle}    \right)^3\oplus\left(\frac{\mathbb{F}_2 [x]}{\langle x -1 \rangle}\right)^3  \nonumber \\
&\cong &  \mathbb{F}_2 ^3(\beta) \oplus \mathbb{F}_2 ^3 (\beta^3 ) \oplus  \mathbb{F}_2 ^3.  
\end{eqnarray*}

Let $C$ be a QC code in $\mathbb{F}_2 ^{21}$ whose corresponding constituent codes are
\begin{align*}C_1  &= \langle (\beta^2 + \beta +1, \beta^2 + \beta +1, \beta^2 +1), (0, \beta^2 +1, \beta^2 +1)\rangle \subseteq \mathbb{F}_2 ^3 (\beta),\\
C_2  &= \mathbb{F}_2 ^3 (\beta^3) = \mathbb{F}_8 ^3 , \\
C_3  &= \{(0,0,0)\}\subseteq \mathbb{F}_2 ^3.
\end{align*}
As $C_1$ is a $[3,2,2]$ code and $C_2$ is a $[3,3,1]$ code, they are both optimal codes over $\mathbb{F}_8$. From Lemma~\ref{dimmension_lemma}, the dimension of $C$ is $15$. 
%{\sout{from Section 5 in~\cite{cem}, we know $d(D_1)=d(D_2)=4$, and $d(D_{1,2})=2$.}}
%
Furthermore, as $BZ(D_1 ^{\perp}) = \{\beta^{-1}\}$, $BZ(D_2 ^{\perp}) = \{\beta^{-3}\}$, and $BZ(D_{1,2} ^{\perp}) = \{\beta^{-1},\beta^{-3}\}$, the $7$-length cyclic code $D=D_{1,2}$ associated to C and its respective cyclic subcodes are
\begin{equation}
D_1 = \langle x^4 +x^2 +x +1 \rangle , D_2 = \langle x^4 +x^3 +x^2 +1 \rangle, \mbox{ and } D_{1,2} = \langle x +1 \rangle,  
\end{equation}
where $d(D_1) = d(D_2) = 4$, and $d(D_{1,2})=2.$ By Theorem~\ref{GO_Bound_General}, we have
\begin{equation}\label{example_guneri}
d\geq \min\{d(C_2) d(D_2), [d(C_1) - d(C_2)]d(D_1) + d(C_2) d(D_{1,2})\} = 4,
\end{equation}
and $C$ is a $[21,15,d\geq 4]$ QC code. Moreover, the dual of $D_{1,2}$ is
\begin{equation}
D_{1,2} ^{\perp}= \langle x^6 +x^5 +x^4 +x^3 +x^2 +x +1\rangle    ,
\end{equation}
namely, it is a [7,1,7] cyclic code (repetition code). From Proposition~\ref{localityofqc}, we conclude that $r\leq 6$ and
\begin{equation}
5= 21 - 15 -\left\lceil \frac{15}{6} \right\rceil +2 = d_S(C) \geq d(C) \geq 4.
\end{equation}
%
%\begin{equation}
% d_S (C) \leq 21 - 15 -\left\lceil \frac{15}{6} \right\rceil +2 =5 \geq d=d(C)\geq 4.
%\end{equation}
Therefore, $C$ is, at least, an almost optimal QC LRC.
\end{example}

With inspiration from Example~\ref{example_optimal}, we  construct a family of optimal QC LRCs by adding additional hypotheses on the constituent codes, namely by increasing their dimension and code length while preserving their minimum distances. The code construction is as follows: Let $C \subset \mathbb{F}_q ^{m\ell}$  be an $[m\ell, k, d]$  QC LRC of index $\ell$ and locality $r$. Factor $x^m-1$ into $s+1$ irreducible factors as in~(\ref{factorization}) so that we have the isomorphism described in (\ref{eq:crtiso}). After re-ordering the factors of $x^m-1$, let $C_i \subset \mathbb{F}_q ^{\ell} \left(\beta^{u_i}\right)\cong \left(\mathbb{F}_q[x]/\langle b_i(x) \rangle\right)^\ell$
be the $[\ell, k_i , d_i]$ non-zero constituent codes of $C$, where $i=1,...,h \leq s+1$, whose minimum distances satisfy (\ref{descending_minimum_distance}), i.e.,
\begin{equation}
 C \cong C_{1} \oplus C_{2} \oplus \cdots \oplus C_{h},\mbox{ where } d(C_1)\geq d(C_2) \geq \ldots \geq d(C_h) .    
\end{equation}

For each $i=1,2,...,h$, define a $[\ell+j, k_i+j, d_i]$ linear code $C_{i}^j$ for $j \in J$, where $J \subseteq \mathbb{Z}^{\geq 0}$ is the set of integers that ensures the existence of the linear code $C_{i}^j$  with the specified parameters. 
\begin{remark}
Note that $|J| \leq \infty$. Furthermore, if $C$ has a zero constituent code $C_i = \{(0,0,...,0)\}$, for some $i$ such that $h+1 \leq i\leq s+1$, then $C_i ^j$ is still a zero constituent code, for any $j\in J.$    
\end{remark}
Let $C^{j} \subset \mathbb{F}_q ^{m(\ell+j)}$ be the respective QC LRC with constituent codes $C_{i}^j$ where $i=1,2,...,s+1$ and $j \in J$. In other words, via the inverse of the Chinese Remainder Theorem, $\iota$, and $\phi$ defined in Section \ref{preliminaries},

\begin{equation}\label{definecj}
C^j = \phi^{-1} \left( \iota^{-1}( C_{1}^j \oplus C_{2}^j + \cdots + C_{s+1}^j)\right) .
\end{equation}

We note that $C^j$ has code length $m(\ell+j)$ and index $\ell+j$. By Lemma~\ref{dimmension_lemma}, $C^j$ has dimension $ \sum_{i=1}^{s+1}(k_i+j)\deg(b_i(x))$.  Moreover, for any $j\in J$, the locality of $C^j$ is the same as the locality of $C$ because the locality is obtained from the cyclic code $D$ associated to $C$, which depends on exclusively of $m$ and the factorization of $x^m -1$. Since  $m$ is unchanged in every $C^j$, the locality $r$ of $C^j$ is also unchanged for all $j \in J$.

\begin{remark}\label{same_GO_bound}
It is worth highlighting that $d_{GO}(C)=d_{GO}(C^{j})$ for any $j \in J$ as the G\"uneri-\"Ozbudak lower bound (Theorem~\ref{GO_Bound_General}) depends exclusively on the minimum distances of the constituent codes of $C$, the minimum distance of the cyclic code $D$ associated to $C$, and the minimum distances of the respective cyclic subcodes of $D$. Moreover, as $d(C_i ^j) = d(C_i)$ for any $1\leq i\leq s+1$ and $j\in J$, and the parameter $m$ is unchanged for each $j$, this guarantees that $C$ and $C^j$ have the same associated cyclic code $D$. This ensures that $d_{GO}(C)=d_{GO}(C^{j})$ for any $j \in J$.
\end{remark}

From now on, we assume that $0\in J$ and $C^0 = C.$ The next example shows how we can use this construction to present a family of QC LRCs which are at least almost optimal, for $j \in J$.

\begin{example}\label{first_cj_example} Let $C^j$ be an $[7\ell, k, d]$ QC code in $\mathbb{F}_{2}^{7\ell}$ of locality $r$ and index $\ell$, where $\ell = 3+ j$ for $j \geq 0$. Consider the following decomposition of $R^\ell = \left( \frac{\mathbb{F}_{2}[x]}{\langle x^7-1 \rangle} \right)^\ell$ obtained by the Chinese Remainder Theorem:
\begin{eqnarray*}\label{CRT_decomposition_R^3}
\left(\frac{\mathbb{F}_2 [x]}{\langle x^7 -1 \rangle}\right)^\ell &\cong&  \left(\frac{\mathbb{F}_2 [x]}{\langle x^3 +x +1 \rangle} \right)^\ell \oplus\left(\frac{\mathbb{F}_2 [x]}{\langle x^3 +x^2 +1 \rangle}    \right)^\ell\oplus\left(\frac{\mathbb{F}_2 [x]}{\langle x -1 \rangle}\right)^\ell  \nonumber \\
&\cong &  \mathbb{F}_2 ^\ell(\beta) \oplus \mathbb{F}_2 ^\ell (\beta^3 ) \oplus  \mathbb{F}_2 ^\ell.  
\end{eqnarray*}

Let $C_1 ^j$, $C_{2}^j$ and $C_3$ be the constituent codes of $C^j$ where
$C_{1}^j= \langle (1,1,\ldots,1)\rangle ^{\perp}$,  $C_2 ^j = \mathbb{F}_{2}^\ell(\beta^3 )$,  and $C_3=\{(0,...,0)\} \subseteq \mathbb{F}_{2}^\ell$. In particular, $C_1 ^j$ is a $[3+j, 2 +j , 2]$ linear code in $\mathbb{F}_{2}^\ell(\beta)$ and $C_2 ^j$ is a $[3+j, 3+j, 1]$ linear code in $\mathbb{F}_2 ^\ell (\beta^3 ) $. 
%
\begin{comment}
\textcolor{magenta}{How did we compute the parameters for $C_{1}^j$ and $C_{2}^j$?}\textcolor{red}{ G.: $C_1 ^j$ is from literature \\(http://www.codetables.de/BKLC/Tables.php?q=2&n0=1&n1=256&k0=1&k1=256)
and $C_2 ^j$ is the whole space.} 
\end{comment}
%
Note that both $C_1 ^j$ and $C_2 ^j$ exist for any $j\geq 0$ and are optimal linear codes. From Lemma~\ref{dimmension_lemma}, we have that the dimension of $C^j$ is $k=(2+j)\cdot 3+(3+j)\cdot 3.$ 
From Theorem~\ref{GO_Bound_General}, $d_{GO} (C^j) = 4$. As seen in Example~\ref{example_optimal}, since $D_{1,2}^{\perp}=\langle (x^3+x+1)(x^3+x^2+1) \rangle$, the locality of $C$ is $r\leq 6$. Thus,
\begin{eqnarray*}\label{ineqoptimalfamily}
d_S (C^j)& = &7(3+j) - [3(2+j) + 3(3+j)] -\left\lceil\frac{3(2+j) + 3(3+j)}{r} \right\rceil +2 \nonumber \\
&\leq & 8 +j -\left\lceil\frac{15 + 6j}{6} \right\rceil
\nonumber  \\
&=& 5
\end{eqnarray*}
  
Thus, $4 = d_{\text{GO}}(C^j) \leq d(C^j) \leq d_{S}(C^j)=5$ which shows that $C^j$ is at least an almost optimal code for any $j \geq 0$. 

\end{example}

With each $j \in J$, this set up using constituent codes allows for the construction of a QC LRC with increased code length and dimension. However, it is worth noting that the code $C^j$ in Example~\ref{first_cj_example} has a Singleton-type bound that is independent of $j$, which creates a constant upper bound on its minimum distance. This is a special case. In general, assuming that $J$ contains other numbers than zero, the Singleton-type bound on the minimum distance of $C^j$ will depend on $j$ which creates a descending chain of Singleton-type bounds for the minimum distance as stated in the following theorem. 
\begin{theorem}\label{mainth}
Let $C\subset \mathbb{F}_q ^{m\ell}$ be a QC code of index $\ell$ and locality $r$. Let $C^j \subset \mathbb{F}_{q}^{m(\ell+j)}$ be the  QC LRC described in~(\ref{definecj}) with $h$ nontrivial constitutent codes $C_i^j$ with parameters $[\ell+j, k_i+j, d_i]$ for $j \in J$ and $i=1,2,...,h \leq s+1$. Let $b_i = \deg(b_i (x))$, for each $1\leq i \leq h$.
If
\begin{equation}\label{main_theorem_inequality}
m + 1 - \left(\sum_{i=1} ^{h} b_i + \left\lceil \frac{1}{r}\sum_{i=1} ^{h}b_i  \right\rceil \right)\leq 0,
\end{equation}
then for any $j \in J$,
\begin{equation}\label{main_theorem_descendingchain}
d_S (C^{j+1}) \leq d_S (C^{j}).
\end{equation}
Moreover, if $d_S(C^{j_0}) = d_{\text{GO}}(C^{j_0})$ for some $j_0 \in J$, then each QC LRC $C^{j}$ is optimal, for all $j\geq j_0$ in $J$.
\end{theorem}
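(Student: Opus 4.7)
The strategy is to treat $d_S(C^j)$ as an explicit function of $j$ and bound its forward difference using a single elementary ceiling inequality. By construction of $C^j$, Lemma~\ref{dimmension_lemma}, and Remark~\ref{same_GO_bound}, the code $C^j$ has length $m(\ell+j)$, dimension $k_j=\sum_{i=1}^{h}(k_i+j)b_i$, and locality $r$ independent of $j$. Substituting into~\eqref{singletontypebound} gives an explicit closed form for $d_S(C^j)$, and since $k_{j+1}-k_j=\sum_{i=1}^{h}b_i$, a direct subtraction yields
\[
d_S(C^{j+1})-d_S(C^{j}) \;=\; m - \sum_{i=1}^{h}b_i - \left(\left\lceil \tfrac{k_{j+1}}{r}\right\rceil - \left\lceil \tfrac{k_j}{r}\right\rceil\right).
\]

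Next I would invoke the elementary inequality $\lceil a+b\rceil\geq \lceil a\rceil + \lceil b\rceil -1$ valid for all real $a,b$, applied with $a=k_j/r$ and $b=\tfrac{1}{r}\sum_{i}b_i$, to obtain
\[
\left\lceil \tfrac{k_{j+1}}{r}\right\rceil - \left\lceil \tfrac{k_j}{r}\right\rceil \;\geq\; \left\lceil \tfrac{1}{r}\sum_{i=1}^{h}b_i\right\rceil - 1.
\]
Plugging this in gives $d_S(C^{j+1})-d_S(C^{j}) \leq m+1-\sum_{i=1}^{h}b_i - \lceil \tfrac{1}{r}\sum_{i=1}^{h}b_i\rceil$, which is $\leq 0$ exactly by hypothesis~\eqref{main_theorem_inequality}. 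This establishes the descending chain~\eqref{main_theorem_descendingchain}. The very same telescoped estimate, applied inductively, shows $d_S(C^{j}) \leq d_S(C^{j_0})$ for every $j\geq j_0$ in $J$.

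For the moreover statement, assume $d_S(C^{j_0}) = d_{\text{GO}}(C^{j_0})$ for some $j_0\in J$. Combining the chain just produced with the Singleton-type upper bound $d(C^j) \leq d_S(C^j)$, the G\"uneri--\"Ozbudak lower bound $d_{\text{GO}}(C^j)\leq d(C^j)$, and the invariance $d_{\text{GO}}(C^j)=d_{\text{GO}}(C^{j_0})$ recorded in Remark~\ref{same_GO_bound}, I would obtain the squeeze
\[
d_S(C^{j_0}) \;=\; d_{\text{GO}}(C^j) \;\leq\; d(C^j) \;\leq\; d_S(C^j) \;\leq\; d_S(C^{j_0}),
\]
which forces equality throughout. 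In particular $d(C^j)=d_S(C^j)$, so $C^j$ is optimal for every $j\geq j_0$ in $J$.

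The main (and essentially only) obstacle is recognizing that hypothesis~\eqref{main_theorem_inequality} has been calibrated to coincide precisely with the slack produced by the ceiling inequality $\lceil a+b\rceil\geq \lceil a\rceil+\lceil b\rceil-1$; once that match is spotted, the whole argument reduces to a one-line forward-difference computation followed by a standard upper/lower bound sandwich.
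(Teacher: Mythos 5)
Your proof is correct and follows essentially the same route as the paper: the same explicit Singleton-type bound computation via Lemma~\ref{dimmension_lemma}, the same ceiling inequality $\lceil a+b\rceil \geq \lceil a\rceil+\lceil b\rceil-1$ matched against hypothesis~\eqref{main_theorem_inequality}, and the same squeeze between $d_{\text{GO}}$ and $d_S$ (using Remark~\ref{same_GO_bound}) for the optimality claim. Your forward-difference presentation is merely a tidier arrangement of the paper's add-and-subtract computation.
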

%
\begin{comment}
\begin{theorem}\label{mainth}
Let $C\subset \mathbb{F}_q ^{m\ell}$ be a quasi-cyclic code of index $\ell$. Without loss of generality, after reordering the factors of $x^{m}-1$, let $$C_i \subset \mathbb{F}_q ^{\ell} \left(\beta^{u_i}\right)\cong \left(\mathbb{F}_q[x]/\langle b_i(x) \rangle\right)^\ell$$ be the non-zero $[l, k_i , d_i]$ constituent codes of $C$, where $i=1,...,t+1 \leq s+1$ ,whose minimum distances satisfy (\ref{descending_minimum_distance}). For every $i=1,2,...,t+1$, assume there exists a finite set $J\subset \mathbb{N}$ which ensures the existence of $[\ell+j, k_i+j, d_i]$ constituent codes $C_{i}^j$ in $\mathbb{F}_{q}^{\ell+j}(\beta^{u_i})$, for any $j\in J$. Let $C^j \subset \mathbb{F}_{q}^{m(\ell+j)}$ be the respective QC LRC where
%
\begin{equation}
C^j \cong C_{1}^j \oplus C_{2}^j + \cdots + C_{t+1}^j   .
\end{equation}
%with constituents $C_{i}^j$. 
If
%
\begin{equation}
m + 1 - \left(\sum_{i=1} ^{t+1} b_i + \left\lceil \frac{\sum_{i=1} ^{t+1}b_i}{r}   \right\rceil \right)\leq 0,
\end{equation}
then $d_S (C^{j+1}) \leq d_S (C^{j})$ for any $j\in J$. In particular, if there exists $s+1$ non-zero constituent codes $C^j$, then it always holds $d_S (C^{j+1}) \leq d_S (C^{j})$.  \textcolor{magenta}{does ``always holds" mean that this inequality is true for all $j \in \mathbb{N}$?} Moreover, if $C$ is an optimal QC cod, then, for any $j\in J$, $C^j$ is also an optimal QC code. 
\end{theorem}
\end{comment}

\begin{proof}
By  Lemma~\ref{dimmension_lemma}, the dimension of $C^j$ is $\sum_{i=1}^{h} (k_i+j)b_i$, for each $j \in J$. Thus, for any $j \in J$, the Singleton-type bound in (\ref{singletontypebound}) for $C^{j+1}$ is 
\begin{eqnarray}
d_S \left(C^{j +1}\right)&=& m(l+j +1) - \sum_{i=1} ^{h} \left(k_i + j +1\right)b_i -\left\lceil \frac{\sum_{i=1} ^{h} \left(k_i + j +1\right)b_i}{r}\right\rceil +2 \nonumber \\  
% &=& ml+ mj +m - \sum_{i=1} ^{t+1} k_ib_i - \sum_{i=0} ^{t+1} jb_i - \sum_{i=0} ^{t+1} b_i \nonumber \\
% &-&\left\lceil \frac{\sum_{i=0} ^{t+1} \left(k_i + (j+1)\right)b_i}{r}\right\rceil +2\nonumber \\
%&=& ml+ mj - \sum_{i=1} ^{t+1} k_ib_i - \sum_{i=1} ^{t+1} jb_i + m - \sum_{i=1} ^{t+1} b_i \nonumber \\
%&-&\left\lceil \frac{\sum_{i=1} ^{t+1} \left(k_i + j+1\right)b_i}{r}\right\rceil +2\nonumber \\
&=& m(l+j) - \sum_{i=1} ^{h} (k_i + j)b_i - \left\lceil \frac{ \sum_{i=1} ^{h} (k_i + j)b_i}{r}\right\rceil + \left\lceil \frac{ \sum_{i=1} ^{h} (k_i + j)b_i}{r}\right\rceil  \nonumber\\
&& + m -\sum_{i=1} ^{h} b_i \nonumber -\left\lceil \frac{\sum_{i=1} ^{h} \left(k_i + (j+1)\right)b_i}{r}\right\rceil +2\nonumber \\
&=& \left( m(l+j) - \sum_{i=1} ^{h} (k_i + j)b_i - \left\lceil \frac{\sum_{i=1} ^{h} (k_i + j)b_i}{r}\right\rceil  +2 \right) \nonumber \\
&& +\left\lceil \frac{\sum_{i=1} ^{h} (k_i + j)b_i}{r}\right\rceil - \left\lceil \frac{\sum_{i=1} ^{h} \left(k_i + (j+1)\right)b_i}{r}\right\rceil  +m -\sum_{i=1} ^{h} b_i 
 \nonumber\\
&=& d_S (C^j ) + \left\lceil \frac{\sum_{i=1} ^{h} (k_i + j)b_i}{r}\right\rceil - \left\lceil \frac{\sum_{i=1} ^{h} \left(k_i + (j+1)\right)b_i}{r}\right\rceil  +m -\sum_{i=1} ^{h} b_i.
 \nonumber
 \end{eqnarray}
For any $x,y \in \mathbb{Q}$, 
 $- (\left\lceil  x \right\rceil + \left\lceil  y \right\rceil -1 )\geq -\left\lceil  x + y\right\rceil $. Thus, we have
\begin{eqnarray}\nonumber
 d_{S}(C^{j+1}) &\leq& d_S (C^j ) + \left\lceil \frac{\sum_{i=1} ^{h} (k_i + j)b_i}{r}\right\rceil - \left\lceil \frac{\sum_{i=1} ^{h} \left(k_i + j \right)b_i}{r}   \right\rceil - \left\lceil \frac{\sum_{i=1} ^{h}b_i}{r}\right\rceil \\ \nonumber
 && +1 +m -\sum_{i=1} ^{h} b_i \\
\nonumber
\\
&=& d_S (C^j ) +  m + 1 - \left(\sum_{i=1} ^{h} b_i + \left\lceil \frac{\sum_{i=1} ^{h}b_i}{r}   \right\rceil \right) 
 \nonumber\\
&\leq& d_S (C^j ),\nonumber
\end{eqnarray}
where the last inequality holds true due to the assumption in (\ref{main_theorem_inequality}). 
%\color{red}
%Now assume $d_S(C)=d_{\text{GO}}(C^{j_0})$ for some $j_0 \in J$. As
%$$d_S(C) \geq d_S(C^1) \geq d_{S}(C^2) \geq \cdots \geq d_{S}(C^{j_0}) %\geq d_{\text{GO}}(C^{j_0})$$
%$d_S(C)=d_{\text{GO}}(C^{j_0})$ implies that the minimum distances of %$C, C^1, C^2,...,C^{j_0}$ must attain equality in the Singleton-type %bound in (\ref{singletontypebound}) and hence are optimal.
Now assume that there exists $j_0 \in J$ such that $d_{S}(C^{j_0})=d_{GO}(C^{j_0})$. Then from~\eqref{main_theorem_descendingchain}, we have
\begin{equation}\label{descendingchain}
d_S(C) \geq \cdots \geq d_{S}(C^{j_0}) = d_{GO}(C^{j_0}) \geq d_{S}(C^{j_0+1}) \geq \cdots \geq d_{S}(C^{|J|}).
\end{equation}
As the lower bound and upper bound for the miniminum distance of $C^{j_0}$ are equal, $C^{j_0}$ is optimal. Also, by Remark \ref{same_GO_bound}, $C^{j}$ is optimal for all $j > j_0$ in $J$.
\end{proof}

\begin{remark} We note that the set $J$ only includes values of $j$ where the Singleton-bound $d_S$ is positive. Moreover, if $|J|=\infty$, then the descending chain in \eqref{descendingchain} will be infinite.
\end{remark}

\begin{comment}{\color{blue}

G.: It is opposite. Indeed, 

Now assume $d_S(C^j)=d_{\text{GO}}(C)$ for some $j \in J$. As for $i\geq j$, we have
$$d_{S}(C^{|J|})\leq \ldots\leq d_{S}(C^{j+1})\leq d_{S}(C^{j}) = d_{\text{GO}}(C)$$
implies that $C^j, C^{j+1}, C^{j+2}, ...,C^{|J|}$ are optimal.

%\textcolor{blue}{
%\begin{remark}
%It is noticed during the proof of Theorem~\ref{mainth} that locality of $C$, $r$, is preserved in $C^j$, for any $j\in J$. It is worth mentioning that the locality of $C^j$, for any $j\in J$, is the same of $C$, because it is obtained from the cyclic code $D$ associated to $C$, which depends on exclusively of $m$ (and factorization of $x^m -1$). Since the parameter $m$ is preserved in every $C^j$, then locality so is. 
%\end{remark}}   
}
\end{comment}

\begin{corollary}  Let $C^j \subset \mathbb{F}_{q}^{m(\ell+j)}$ be the  QC LRC described in~(\ref{definecj}). If $C^j$ has $s+1$ nontrivial constitutent codes with parameters $[\ell+j, k_i+j, d_i]$ for each $j \in J$, then 
$d_S (C^{j+1}) \leq d_S (C^{j})$.  
\end{corollary}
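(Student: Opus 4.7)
The plan is to derive this as an immediate consequence of Theorem~\ref{mainth}. The only thing to verify is that the hypothesis~\eqref{main_theorem_inequality} of that theorem automatically holds whenever every constituent code is nontrivial, i.e., when $h = s+1$.

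The key observation is that the $b_i(x)$ are precisely the irreducible factors appearing in the factorization~\eqref{factorization} of $x^m - 1$. Hence
\begin{equation*}
\sum_{i=1}^{s+1} b_i = \sum_{i=1}^{s+1} \deg(b_i(x)) = \deg(x^m - 1) = m.
\end{equation*}
Substituting $h = s+1$ and $\sum_{i=1}^{h} b_i = m$ into the left-hand side of~\eqref{main_theorem_inequality} yields
\begin{equation*}
m + 1 - \left( m + \left\lceil \frac{m}{r} \right\rceil \right) = 1 - \left\lceil \frac{m}{r} \right\rceil.
\end{equation*}

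Since $m \geq 1$ and the locality $r$ of any nontrivial code satisfies $r \geq 1$, we have $\lceil m/r \rceil \geq 1$, so the displayed expression is at most $0$. Thus the hypothesis of Theorem~\ref{mainth} is satisfied, and applying the theorem gives $d_S(C^{j+1}) \leq d_S(C^j)$ for every $j \in J$.

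I do not anticipate any real obstacle here: this is a direct specialization of Theorem~\ref{mainth} using the fact that the degrees of the irreducible factors of $x^m - 1$ sum to $m$. The only subtlety worth mentioning explicitly in the proof is that $r \geq 1$ (so that the ceiling term contributes at least $1$), which ensures the inequality is nonstrict rather than failing in some degenerate case.
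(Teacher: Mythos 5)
Your proof is correct and follows essentially the same route as the paper: both arguments reduce to the observation that with all $s+1$ constituents nontrivial, $\sum_i \deg(b_i(x)) = m$, so the left side of~\eqref{main_theorem_inequality} becomes $1 - \lceil m/r \rceil \leq 0$. (The paper justifies the ceiling term via $r \leq m-1$ from Proposition~\ref{localityofqc}, whereas you only need $r \geq 1$; either suffices.)
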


\begin{proof} If $C^j$ has no zero constituent codes, then $m =\sum_{i=1} ^{s+1} b_i$ and
$$d_S(C^{j+1}) \leq d_{S}(C^j) - \left\lceil \dfrac{m}{r}\right\rceil+1.$$
As $r\leq m-1 $ by Proposition~\ref{localityofqc}, we have  
$d_S(C^{j+1}) \leq d_{S}(C^j).$
\begin{comment} m + 1 - \sum_{i=1} ^{t+1} b_i - \left\lceil \frac{\sum_{i=1} ^{t+1}b_i}{r}   \right\rceil&=& m - \sum_{i=1} ^{t+1} b_i +1  -\left\lceil \frac{\sum_{i=1} ^{t+1}b_i}{r}   \right\rceil  \nonumber\\
&=& 1  -\left\lceil \frac{m}{r} \right\rceil \nonumber\\
&\leq &0,
\end{comment} 
\end{proof}

%\textcolor{red}{G.: I believe it would be important to preserve this next corollary}

\begin{corollary}
Under hypothesis of Theorem~\ref{mainth}, if $C$ is an optimal QC LRC  then $C^j$ is also an optimal QC code for any $j\in J$.
\end{corollary}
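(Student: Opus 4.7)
The plan is to deduce this corollary as an immediate application of Theorem~\ref{mainth} with the choice $j_0 = 0$. The hypothesis of Theorem~\ref{mainth} is already in force by assumption, so the descending chain \eqref{descendingchain} is available, and by Remark~\ref{same_GO_bound} we also have $d_{\mathrm{GO}}(C^{j}) = d_{\mathrm{GO}}(C^{0})$ for every $j \in J$.

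First, I would record that $C = C^{0}$ belongs to the family $\{C^{j}\}_{j\in J}$, so the Singleton-type bound and the G\"uneri-\"Ozbudak lower bound both apply to it on equal footing with the other members. Since $C$ is assumed optimal, $d(C^{0}) = d_{S}(C^{0})$; combined with the universal sandwich $d_{\mathrm{GO}}(C^{0}) \leq d(C^{0}) \leq d_{S}(C^{0})$, this collapses to $d_{\mathrm{GO}}(C^{0}) = d_{S}(C^{0}) = d(C^{0})$ in the setting of the construction, where the only available lower bound on the minimum distance is $d_{\mathrm{GO}}$. This places us squarely in the hypothesis of the second part of Theorem~\ref{mainth} with $j_{0} = 0$.

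Then I would simply invoke Theorem~\ref{mainth} to conclude that $C^{j}$ is optimal for every $j \geq 0$ in $J$. Unpacking what this gives: for any $j \in J$, the descending chain yields $d_{S}(C^{j}) \leq d_{S}(C^{0}) = d_{\mathrm{GO}}(C^{0}) = d_{\mathrm{GO}}(C^{j})$, and the sandwich $d_{\mathrm{GO}}(C^{j}) \leq d(C^{j}) \leq d_{S}(C^{j})$ then forces equality throughout, so $d(C^{j}) = d_{S}(C^{j})$, i.e., $C^{j}$ is optimal.

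I do not expect a real obstacle here; the corollary is essentially a restatement of Theorem~\ref{mainth} in the particularly clean case $j_{0}=0$. The only point to be careful about is noting that the optimality of $C$ must be realized by the G\"uneri-\"Ozbudak bound meeting the Singleton-type bound, which is automatic in the context of the construction developed in Section~\ref{examples} since that is how optimality is certified throughout.
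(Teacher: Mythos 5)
The paper gives no written proof of this corollary---it is presented as an immediate consequence of Theorem~\ref{mainth}---and your argument (apply the second part of the theorem with $j_0=0$, then sandwich $d_{\mathrm{GO}}(C^j)\leq d(C^j)\leq d_S(C^j)\leq d_S(C^0)$) is exactly the intended one. The only delicate point is the one you yourself flag: ``optimal'' means $d(C)=d_S(C)$, which does not formally yield the hypothesis $d_{\mathrm{GO}}(C)=d_S(C)$ needed for the theorem's second part, since $d_{\mathrm{GO}}$ could a priori be strictly below the true minimum distance; your resolution---that in this paper optimality is always certified by the G\"uneri--\"Ozbudak bound meeting the Singleton-type bound---is the same implicit reading the authors rely on, so the proposal is correct as a proof of the corollary as the paper intends it.
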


In the next example, we use our code construction to prove the existence of a QC LRC with increased code length and dimension which are optimal for a certain value of $j \in J$.  In the next example, we note that we change the order of our factorization so that the first constituent code in our decomposition is described over $\mathbb{F}_5$, and inequality~\eqref{descending_minimum_distance} is satisfied.

\begin{example}\label{optimal_example}
Let $q=5,\,m=11$, and $\ell=7$, $\beta \in \mathbb{F}_{5^5}$ be an $11-$primitive root of unity. Thus, by Chinese Remainder Theorem, we have 
\begin{eqnarray*}
\left(\frac{\mathbb{F}_5 [x]}{\langle x^{11} -1\rangle}\right)^7 &\cong& \left(\frac{\mathbb{F}_5 [x]}{\langle x+4\rangle}\right)^7 \oplus \left(\frac{\mathbb{F}_5 [x]}{\langle x^5 +2x^4 +4x^3 + x^2 + x +4 \rangle}\right)^7\nonumber\\
&\oplus& \left(\frac{\mathbb{F}_5 [x]}{\langle x^5 +4x^4 +4x^3 + x^2 + 3x +4 \rangle}\right)^7\nonumber \\ 
&\cong& \mathbb{F}_5 ^7 \oplus \mathbb{F}_{625} ^7 \oplus 
\mathbb{F}_{625} ^7 = \mathbb{F}_5 ^7  (1 ) \oplus \mathbb{F}_{5}^7 (\beta^2)  \oplus \mathbb{F}_{5}^7 (\beta )  .
\end{eqnarray*}
Since $$BZ(D^{\perp})= \{1, \beta_2 ^{-1}  =\beta^9, \beta ^{-1} = \beta^{10}  \},$$ we have
$$
\begin{array}{ccccc}
D_1= \langle x^{10}+ x^9 + x^8 +x^7 +x^6  +x^5 +x^4 + x^3 + x^2 +x +1\rangle &\Rightarrow& d(D_1) &=& 11\\
D_2= \langle  x^6 +3x^5 +2x^3 +2x^2 + x+1  \rangle &\Rightarrow& d(D_2) &=& 6\\
D_3= \langle  x^6 +x^5 +2x^4 +2x^3 +3x +1 \rangle &\Rightarrow& d(D_3) &=& 6\\
D_{1,2}= \langle x^5+ 4x^4 + 4x^3 + x^2 + 3x +4\rangle &\Rightarrow& d(D_{1,2}) &=& 5\\
D_{1,3}= \langle x^5+ 2x^4 + 4x^3 + x^2 + x+4\rangle &\Rightarrow& d(D_{1,3}) &=& 5\\
D_{2,3}= \langle x +4 \rangle &\Rightarrow& d(D_{2,3}) &=& 2 \\
D_{1,2,3}= \langle 1 \rangle = \mathbb{F}_5 ^{11}&\Rightarrow& d(D_{1,2,3}) &=& 1
\end{array}
$$
where these minimum distances were computed using \texttt{SageMath}. 

Let $C$ be a QC LRC in $\left(\frac{\mathbb{F}_5 [x]}{\langle x^{11} -1\rangle}\right)^7$ and let $C_1$, $C_2$ and $C_3$ be the constituent codes of $C$, where $C_1\subset \mathbb{F}_5 ^7 $ is a $[7,3,4]$ linear code, $C_2 \subset \mathbb{F}_5  ^7 (\beta^2 )$ is a $[7,4,3]$ linear code, and is $C_3 \subset \mathbb{F}_5 ^7(\beta) $ a $[7,5,2]$ linear code. Note that the minimum distances of $C_1, C_2$ and $C_3$ satisfy inequality~\eqref{descending_minimum_distance}. The existence and how to construct such codes may can be found in~\cite{table}. Using these constituent codes and the information in~\cite{table} (which ensures the existence of a set $J$), we define the constituent codes $C_1 ^j$ , $C_2 ^j$, and $C_3 ^j$ whose parameters are $[7+j, 3+j,4]$, $[7+j, 4+j,3]$, and $[7+j, 5+j,2]$, respectively, where the minimum distance of the constituent codes is preserved for all $j\geq 0$. The respective QC code
\begin{equation}
C^j = \phi^{-1} \left( \iota^{-1}( C_1 ^j \oplus C_2 ^j  \oplus C_3 ^j )\right)   
\end{equation}
has length $ 11(7+j) = 77+ 11j$. According to Lemma~\ref{dimmension_lemma} and Proposition~\ref{localityofqc}, $C^j$ has dimension $k = 3+j +5(4+j) +5(5+j) = 48 + 11j$ and its locality $r$ is upper bounded by $10$. In addition,  the Singleton-bound of $C^j$ is
\begin{eqnarray}\label{singboundex}
d_S (C^j ) &=& 77+ 11j - (48 +11j) - \left\lceil \frac{48 +11j}{r}\right\rceil +2\nonumber \\
 &\leq& 77+ 11j - (48 +11j) - \left\lceil \frac{48 +11j}{10}\right\rceil +2\nonumber \\
&=& 31 - \left\lceil \frac{48 +11j}{10}\right\rceil
\end{eqnarray}
where $j \in J=\{1,2,3,...,22\}$. Note that for $j \in J$, inequality ~\eqref{main_theorem_inequality} is satisfied and consequently ~\eqref{main_theorem_descendingchain} in Theorem~\ref{mainth}. On the other hand, the upper bound for the minimum distance from Theorem~\ref{GO_Bound_General}  yields
\begin{equation}
d_{GO} (C) = d_{GO} (C^j) =  \min\{R_3 , R_{2,3} , R_{123} \}=10, 
\end{equation}
where 
\begin{align*}
R_3 &= d(C^j_3) d(D_3) =12;\\
R_{2,3}&=(d(C^j_2) - d(C^j_3) )d(D_2) + d(C^j_3) d(D_{2,3}) =10;\\
R_{1,2,3} &= (d(C^j_1) - d(C^j_2) )d(D_1) + (d(C^j_2) -d(C^j_3) )d(D_{1,3}) + d(C^j_3) d(D_{1,2,3}) = 18.
\end{align*}
When $j = 14$, the bound in~\eqref{singboundex} becomes $31 - \left\lceil \frac{48 +11j}{10}\right\rceil = 10$. Hence, we have
\begin{equation}
10 = d_{GO} (C^{14}) \leq d(C^{14}) \leq d_S (C^{14}) \leq 10 . 
\end{equation}
which shows that $C^{14}$ is an \textit{optimal} $[231, 202, 10]$ QC LRC with index $21$ and locality $r \leq 10$. 
%\textcolor{magenta}{We note that the information rate of $C^{14}$ is $\frac{202}{231} \cong 0.87$, which is closer to $1$ when compared to the information rate for $C$. This indicates that $C^{14}$ has less redundancy than $C$ and is a more efficient code.}
\end{example}

We note the following differences between the last two examples in this section. In Example \ref{first_cj_example}, the Singleton-type bound for $C^j$ is independent of $j$, for each $j \in J$. In Example \ref{optimal_example}, the upper bound on the minimum distance of $C^j$ depends on $j$. This allows us to find a specific $j \in J$ so that $d_S (C^j) = d_{GO} (C^j) = d_{GO} (C)$, which in turn shows that $C^j$ is optimal and has an increased code length and dimension when compared to $C$.

\section{Conclusion and Future Directions}

In this paper, we have presented a construction of optimal QC LRCs in $\mathbb{F}_q ^{m\ell}$ using constituent codes studied in ~\cite{cem,lingsole}. We first presented an upper bound for the locality of a QC LRC using different approach from the bound presented in~\cite{RB}. Using this upper bound for the locality, our main result relies on when the Singleton-type bound for the minimum distance for a QC LRC meets the G\"uneri-\"Ozbudak´s lower bound for the minimum distance. In particular, such a lower bound depends on the minimum distance of the constituent codes of the QC LRC,  the cyclic code associated to the QC LRC, and the cyclic subcodes of the associated cyclic code.  As the minimum distance of the associated cyclic code depends on $m$, we analyze when it is possible to obtain new constituent codes by increasing their code length and dimension while preserving their original minimum distances. Such an increase in code length and dimension results in a respective QC LRC whose respective Singleton-type bound will be reduced to (nearly) attain the G\"uneri-\"Ozbudak´s lower bound. This construction not only produces almost optimal and optimal QC LRCs, but codes with an increased dimension and code length.

For future directions of this work, we intend to further study constituent codes of QC LRCs to determine conditions on the constituent codes' parameters that will guarantee that their respective QC LRC is optimal. Additionally, we plan to use constituent codes and self-orthogonality conditions to construct good quantum error-correcting codes and determine their parameters.

\bibliographystyle{plain}
\bibliography{QC_LRC}

\begin{thebibliography}{1}

\bibitem{TamoBargalgcurves} A. Barg, I. Tamo, and S. Vlăduţ, \emph{Locally Recoverable Codes on Algebraic Curves}, IEEE Transactions on Information Theory, vol. 63, no. 8, pp. 4928-4939, 2017.
%, doi: 10.1109/TIT.2017.2700859.

\bibitem{seguin} J. Conan and G. Séguin, \textit{Structural Properties and Enumeration of Quasicyclic
Codes}, Appl. Alg. in Eng. Comm. Comput. 4, pp. 25–39, 1993.

\bibitem{Goparaju} S. Goparaju and R. Calderbank, \emph{Binary Cyclic Codes That are Locally Repairable}, 2014 IEEE International Symposium on Information Theory, Honolulu, HI, USA, pp. 676-680, 
2014.
%doi: 10.1109/ISIT.2014.6874918.

\bibitem{table} M. Grassl, \textit{Bounds on the Minimum Distance of Linear Codes and Quantum Codes}.  Online available at http://www.codetables.de. Accessed on 2023-09-26.

\bibitem{cem} C. G\"uneri and F. \"Ozbudak, \textit{A Bound on the Minimum Distance of Quasi-Cyclic Codes}, SIAM Journal on Discrete Mathematics, vol. 26, no. 4,  pp. 1781-1796, 2012.

\bibitem{GO2012} C. G\"uneri and F. \"Ozbudak, \emph{A Relation Between Quasi-Cyclic Codes and 2-D Cyclic Codes}, Finite Fields and Their Applications, vol. 18, pp. 123-132, 2012.


\bibitem{Huang} P. Huang, E. Yaakobi, H. Uchikawa, and P. H. Siegel, \emph{Cyclic Linear Binary Locally Repairable Codes}, 2015 IEEE Information Theory Workshop (ITW), Jerusalem, Israel, pp. 1-5, 
2015.
%doi: 10.1109/ITW.2015.7133128.

\bibitem{HKS}W. C. Huffman, J-L. Kim, and P. Solé, \textit{Concise Encyclopedia of Coding Theory}, CRC Press, 1st ed., 2021.

\bibitem{lally} K. Lally, \textit{Quasicyclic Codes of Index $l$ over $\mathbb{F}_q$ Viewed as $\mathbb{F}[x]-$Submodules of $\frac{\mathbb{F}_{q^l}[x]}{\langle x^m - 1\rangle}$}, Applied Algebra, Algebraic Algorithms and Error-Correcting Codes - Lecture Notes in Comput. Sci., vol. 2643, pp. 244 -253. Springer, Berlin, Heidleberg, 2003.

\bibitem{Lim} C. J. Lim , \emph{Quasi-cyclic Codes with Cyclic Constituent Codes}, Finite Fields and Their Applications, vol. 13, pp. 516-534, 2007.

\bibitem{lingsole} S. Ling and P. Solé, \textit{On the Algebraic Structure of Quasi-Cyclic Codes I: Finite Fields}, IEEE Transactions on Information Theory, vol. 47, no. 7, pp. 2751-2760, 2001.
%, doi: 10.1109/18.959257.


\bibitem{luo} Y. Luo, C. Xing and C. Yuan, \textit{Optimal Locally Repairable Codes of Distance 3 and 4 via Cyclic Codes}, IEEE Transactions on Information Theory, vol. 65, no. 2, pp. 1048-1053, 2019.


\bibitem{RB} C. Rajput, M. Bhaintwal, \textit{On the Locality of Quasi-cyclic Codes over Finite Fields}, Des. Codes Cryptogr., pp. 759–777, 2022.

\bibitem{Shi-Zhang-Sole16} M. Shi, Y. Zhang, and P. Solé,  \emph{Notes on Quasi-Cyclic Codes with Cyclic Constituent Codes}, 190, pp. 193-206, 2015. 
%10.1007/978-3-319-46310-0\_12. 

\bibitem{SolomonTilborg} G. Solomon and H.C.A. van Tilborg, \textit{A Connection Between Block and Convolutional Codes}, SIAM J. Appl. Math. 37, pp. 358-369, 1979.

\bibitem{TamoBarg} I. Tamo and A. Barg, \emph{A Family of Optimal Locally Recoverable Codes}, IEEE Transactions on Information Theory, vol. 60, no. 8, pp. 4661-4676, 2014.
%, doi: 10.1109/TIT.2014.2321280.

\bibitem{TamoBargcycliccodes} I. Tamo, A. Barg, S. Goparaju and R. Calderbank, \emph{Cyclic LRC Codes and Their Subfield Subcodes}, 2015 IEEE International Symposium on Information Theory (ISIT), Hong Kong, China, pp. 1262-1266, 2015.
%doi: 10.1109/ISIT.2015.7282658.

\bibitem{tamo-barg-goparaju-calderbank} I. Tamo, A. Barg, S. Goparaju, and R. Calderbank, \textit{Cyclic LRC Codes, Binary LRC Codes, and Upper Bounds on the Distance of Cyclic Codes}, Int. J. Inf. Coding Theory 3(4), pp. 345–364, 2016.

\bibitem{Pan} P. Tan, Z. Zhou, H. Yan, and U. Parampalli, \textit{Optimal Cyclic Locally Repairable Codes via Cyclotomic Polynomials}, IEEE Communications Letters, vol. 23, no. 2, pp. 202-205, 2019.
%, doi: 10.1109/LCOMM.2018.2882849.

\bibitem{longcodes} E.J. Weldon Jr., \emph{Long Quasi-Cyclic Codes are Good}, IEEE Transactions on Information Theory, vol. IT-13, pp. 130, 1970.

\bibitem{wolfmann} J. Wolfmann, \textit{New Bounds on Cyclic Codes from Algebraic Curves}, in Coding Theory and Applications, Lecture Notes in Comput. Sci. 388, Springer-Verlag, New York, pp. 47–62, 1989.


\end{thebibliography}

\end{document}